\long\def\comment#1{}
\newfont{\bbb}{msbm10 scaled 800}
\newfont{\bb}{msbm10 scaled 1100}
\newcommand{\CC}{\mbox{\bb C}}
\newcommand{\EE}{\mbox{\bb E}}
\newcommand{\sv}{{\bf s}}
\newcommand{\uv}{{\bf u}}
\newcommand{\xv}{{\bf x}}
\newcommand{\yv}{{\bf y}}
\newcommand{\zv}{{\bf z}}
\newcommand{\Hm}{{\bf H}}
\newcommand{\Km}{{\bf K}}
\newtheorem{thm}{Theorem}
\newtheorem{lemma}{Lemma}
\newtheorem{defn}{Definition}
\newtheorem{remark}{\indent \bf Remark}
\newcommand{\msf}[1]{\mathsf{#1}}
\begin{document}

\title{Cooperation Alignment for Distributed Interference Management}

\author{Vasilis~Ntranos$^\dagger$\thanks{This work is the outcome of a collaboration that started while V. Ntranos was a research intern at Bell Labs, Alcatel-Lucent. Emails: ntranos@usc.edu,
mohammadali.maddah-ali@alcatel-lucent.com, caire@tu-berlin.de. A shorter version of this paper containing preliminary results
was presented at IEEE Int. Symp. on Inform. Theory, Hong Kong, June  14-19, 2015}, 
        Mohammad~Ali~Maddah-Ali$^\ast$,  and
        Giuseppe~Caire$^\ddagger$ \\
        $^\dagger$University of Southern California, Los Angeles, CA,  USA\\ 
				$^\ast$Bell Labs, Alcatel-Lucent, Holmdel, NJ, USA \\
				$^\ddagger$Technical University of Berlin,  Germany}
        
\maketitle

\begin{abstract}
We consider a cooperative Gaussian interference channel in which each receiver must decode its intended message 
locally, with the help of cooperation either at the receivers side or at the transmitter side. 
In the case of receiver cooperation, the receivers can process and share information through limited 
capacity backhaul links. In contrast to various previously considered distributed antenna architectures, 
where processing is performed in a centralized fashion, the model considered in this paper aims to capture the essence of decentralized 
processing, allowing for a more general class of ``interactive'' interference 
management strategies.  For the three-user case, we characterize the fundamental tradeoff between the achievable 
communication rates and the corresponding backhaul cooperation rate, in terms of degrees of freedom (DoF). 
Surprisingly, we show that the optimum communication-cooperation tradeoff per user remains the same when we move from the two-user to three-user case.
In the absence of cooperation, this is due to interference alignment. 
When backhaul cooperation is available,  we develop the new idea of \emph{cooperation alignment}, which guarantees that the 
average (per user) backhaul load remains the same as we increase the number of users.  
In the case of transmitter cooperation, the transmitters can form their jointly precoded signals through an interactive protocol over the backhaul. 
Specifically, we show that the optimal (per user) communication/cooperation tradeoff in the three-user case is the same 
as for receiver cooperation.
\end{abstract}

\newpage

\begin{IEEEkeywords}
\center Distributed Interference Management, Interference Channel, Cooperative Communication, Interference Alignment, Compute and Forward
\end{IEEEkeywords}

\section{Introduction} \label{intro}
Consider a $K$-user Gaussian interference channel with cooperation either at the transmitter or at the receiver side. This paper focuses on the 
fundamental limits of distributed cooperation, achieved through a wired backhaul network consisting of noiseless links, 
through which every pair of receivers (resp., transmitters)
can interact.
A natural question arising from this cooperative communication setup is ``How much backhaul capacity is required in order to 
achieve a given communication rate?'' or, equivalently ``What is the best communication rate that one can achieve for a given constraint 
on the total backhaul capacity?''  

We first focus on {\em receiver cooperation}.  
In this case, for the two-user case the \emph{communication vs cooperation tradeoff} has been characterized within a constant gap  
in \cite{wt-Rx11}. From a degrees of freedom (DoF) perspective, if the average (per user) rate scales as $R = \msf{DoF}\cdot\log(P) + o(\log(P))$ 
and the average (per user) backhaul cooperation load scales as $L = \alpha\cdot\log(P)+ o(\log(P))$, the results of 
\cite{wt-Rx11} can be used to show that the optimal communication vs cooperation tradeoff for the two-user interference channel 
is given by $\msf{DoF}^{*}(\alpha)=\min\{1,\frac{1+\alpha}{2}\}$.
This is a very intuitive result in terms of the achievable {DoF}; when $\alpha=0$, one can  achieve $\msf{DoF}(0)=1/2$ by orthogonal user 
scheduling and when $\alpha=1$, $\msf{DoF}(1)=1$ can easily be achieved by exchanging the user's (appropriately quantized) received 
observations over the backhaul, such that each receiver has two observations and can eliminate (e.g., by simple linear processing) the unintended signal interference. However, we can immediately see that following the same approach for the $K$-user  case is not optimal in general. 
To begin with, it is well-known that transmission schemes based on interference alignment \cite{mmk08,cj08,mgmk09} are still able to 
achieve $\msf{DoF}(0) = 1/2$ no matter how many users are interfering in the network. 
The fundamental question that we aim to answer in this work is whether the same holds for all  values of $\alpha\geq0$; or, to put it in other words, 
whether the  entire communication vs cooperation tradeoff remains unaffected by the presence of more than two interfering links.

Surprisingly,  our results  show  that this is indeed the case for the three-user interference channel. This result is shown in this paper by developing  
the new idea of \emph{cooperation alignment} that has the same effect on the backhaul load as interference alignment has on the ``wireless'' 
degrees of freedom; from each receiver's perspective, it appears as if a \emph{single} user jointly processes the observations of the entire 
network and only shares the necessary information over the backhaul. 

In order to explain the idea of cooperation alignment let us focus on the case of $\alpha=1$ with $\msf{DoF}(1)=1$ in the noiseless case, which captures the essence of degrees of freedom.  In the three-user interference channel, receivers one, two, and three, observe the interfering terms $h_{12}x_2+h_{13}x_3$, $h_{21}x_1+h_{23}x_3$, and $h_{31}x_1+h_{32}x_2$, respectively. 
If each receiver had access to its own interference,\footnote{It is sufficient that the interference term is known within a distortion with bounded 
mean-square error as the signal power increases.} 
then it would be able cancel it out, and decode its own message with DoF of one.  Given that the transmit power is $P$, this knowledge itself would require about $3\log(P)+o(\log(P))$ bits of information and each receiver would need to receive at least $\log(P)+o(\log(P))$ bits from the backhaul. 
However, the challenge is that these exact interfering signals are not available at any of the receivers. For example, none of the receivers has access to  the particular combination of $h_{12}x_2+h_{13}x_3$ that receiver one needs in order to decode its  message. Therefore, it would seem impossible to achieve one DoF per user in the wireless channel, with a (per user) backhaul load of  $\log(P)+o(\log(P))$. In this paper, however, we show  that with \emph{cooperation alignment}, the receivers are able to create these combinations in a distributed manner, through an iterative process in which they sequentially decode small parts of their original messages and share ``carefully chosen'' interfering combinations over the backhaul.

For the case of transmitter cooperation, we consider the same backhaul connectivity model as in the case of receiver cooperation but with the role 
of transmitters and receivers being exchanged. Namely, the transmitters are allowed to cooperate by exchanging backhaul messages with the purpose
of jointly encoding their transmitted signals to mitigate interference, while the receivers will attempt to decode their intended messages solely 
based on their received observations.\footnote{While in this paper we consider the problem from a purely information-theoretic viewpoint, 
it is clear that, in practice, receiver and transmitter cooperation are relevant to the uplink and downlink of a cellular/wireless network, respectively, 
where the cooperation through the backhaul network is implemented in both cases at the base-station side.}
For this case, we show that the optimal tradeoff $\msf{DoF}^{*}(\alpha)$ in the case of three users is the same as for the case of transmitter cooperation.

The main contributions of this papers are as follows.
\begin{itemize}
\item We propose an information theoretic  channel model that reveals the  fundamental challenges of decentralized (over the cloud) backhaul cooperation in wireless  networks.
\item We characterize the optimum communication vs cooperation tradeoff for three-user interference channels.
\item We exhibit a new form of alignment  -- that we term \emph{cooperation alignment} -- 
that is able to achieve the optimal tradeoff.
\end{itemize}

{\it Related Work:}
Several results, that have developed and used techniques that are closely related~to  our achievable schemes, can be found in \cite{nnw13,nw12,sa14,ng11,mgmk09}.
In particular, \cite{nnw13,nw12} proposed a lattice coding scheme for compute-and-forward \cite{ng11} based on  techniques developed for interference alignment \cite{mgmk09,cj08,ergodic} to show that $K$ relays can reliably decode a (jointly) invertible function of the $K$ interfering messages sent by the transmitters and achieve a computation rate with $K$ degrees of freedom. More recently, \cite{sa14} focused on a $K\times K\times K$  (two-hop) wireless relay network and used similar techniques to design a novel aligned network diagonalization scheme that is able to distributedly invert the corresponding decoded functions of the $K$ transmitted messages, over the second wireless channel from the $K$ relays  to the $K$ receivers.
It is important also to note that  distributed interference management techniques have been considered in  \cite{ssps08-d,nmc14,Cell_IA_omni} in the context of cellular networks, under different backhaul cooperation models.

{This paper is organized as follows.} For ease of exposition and in order to avoid overly repetitive definitions, 
we first focus on receiver cooperation and then extend the problem definition to the case of transmitter cooperation.  In  
Section~\ref{sec:prob} we provide the basic definitions and formally describe the proposed channel model. 
Then, in Section~\ref{sec:main} we outline our main results for the communication vs cooperation tradeoff and in Section~\ref{achiev} 
and Appendix~\ref{proof:conv}  we give the corresponding proofs. 
In Section \ref{transmitter-cooperation-section} we focus on transmitter cooperation, where we develop the corresponding problem definition
and provide analogous results (i.e., the characterization of the optimal communication vs cooperation) for this case. 
Finally, we conclude this paper with Section~\ref{conclusions}.

\section{Cooperation at the Receiver Side: Problem Statement}\label{sec:prob}

\subsection{Distributed Cooperation Channel Model}\label{sec:modelRX}
The channel model considered in this paper is illustrated in Fig.~\ref{channel}.
Each transmitter $i\in \{1,...,K\}$ has a message $W_{i}$ (intended for receiver $i$) which is encoded into a block \mbox{length-$n$} codeword $[x_{i}(t)]_{t=1}^{n}$  satisfying the average power constraint $\frac{1}{n}\sum_{t=1}^{n}|x_{i}(t)|^{2}\leq P.$
The received signal at the $i$th receiver at time $t=1,...,n$ is given by
\begin{equation}
y_{i}(t) = \sum_{j=1}^{K} h_{ij} x_{j}(t) + z_{i}(t),\vspace{-0.05in}\end{equation}
where $h_{ij}\in \CC$ is the (complex) channel gain between the $j$th transmitter and the $i$th receiver, and $z_{i}(t)$ is the additive circularly-symmetric complex Gaussian noise observed at receiver $i$ with zero mean and unit variance.

\begin{figure}[ht]

                \centering
                \includegraphics[width=1\columnwidth]{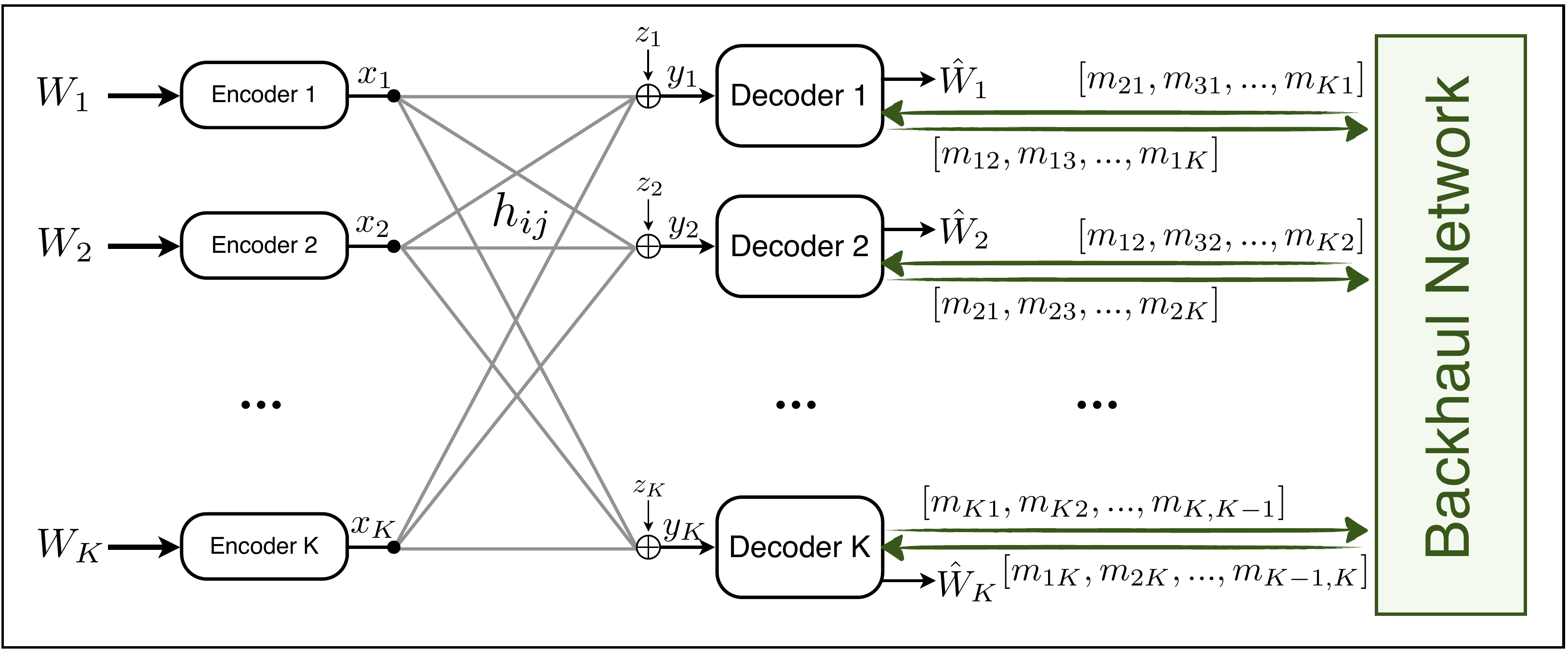}
                \caption{Channel model}
                \label{channel}

\end{figure}

The decoders are able to collaborate over the backhaul  in order to produce their estimates, $\hat W_{i}$, $i\in \{1,...,K\}$. We assume that the backhaul network consists of directed noiseless links $[i,\hat i]$, between every pair of decoders \mbox{$i\neq\hat i$~$\in \{1,...,K\}$}, and the rate from decoder $i$ to  decoder $\hat i$ is denoted by $R_{\rm b}^{[i,\hat i]}$. 
The backhaul message from decoder $i$ to decoder $\hat i$, that passes through the link $[i,\hat i]$ at time $t$, is denoted by $m_{i\rightarrow \hat i}(t)$ and is given as a function of all the previously received signals $[y_{i}(\tau)]_{\tau=1}^{t-1}$ at receiver $i$ and all the previously received messages  $[m_{\ell\rightarrow  i}(\tau)]_{\tau=1}^{t-1}$ from  all decoders $\ell\in \{1,...,K\}, \ell\neq i$.

The rate of each backhaul link $R_{\rm b}^{[i,\hat i]}$ is therefore determined by the average joint entropy of the messages $[m_{i\rightarrow \hat i}(\tau)]_{\tau=1}^{n}$ that pass through it and  is given by
\begin{equation}  \label{ziopino}
R_{\rm b}^{[i,\hat i]} = \frac{1}{n}H\bigl([m_{i\rightarrow \hat i}(\tau)]_{\tau=1}^{n}\bigr).
\end{equation} 
%


An important quantity that will be used in the rest of this paper is the average (per user) backhaul cooperation rate given~by 
\begin{equation}  \label{ziapina}
 \overline R_{\rm b} \triangleq \frac{1}{K}\sum_{i=1}^{K}\sum_{\hat i\neq i}R_{\rm b}^{[i,\hat i]}.
\end{equation}

%
%
%
\subsection{Achievable Rates, Capacity, and Degrees of Freedom}

The rate tuple  $(R_{1},R_{2},...,R_{K})$ is called \emph{achievable} under an average backhaul cooperation rate constraint $\overline R_{\rm b}\leq L$, if for any $\epsilon>0$ and sufficiently large $n$, there exist a length-$n$ coding scheme defined by:
\begin{itemize}
\item The message sets ${\cal W}_{i} = \{1,2,...,2^{nR_{i}}\}$, $i=1,...,K$.
\item The encoding functions ${f}_{i}: {\cal W}_{i}\rightarrow \CC^{n}$, $i=1,...,K$.
\item The backhaul relaying functions 
 $g_t^{[i,\hat i]}$ that generate $m_{i\rightarrow\hat i}(t)$ such that  
$$m_{i\rightarrow \hat i}(t) = g_t^{[i,\hat i]} \left([y_{i}(\tau)]_{\tau=1}^{t-1},M_{i}^{t-1}\right) \in {\cal B}^{[i,\hat i]},$$
where  $M_{i}^{t}\triangleq \left\{[m_{\ell\rightarrow  i}(\tau)]_{\tau=1}^{t} : \ell = 1,\ldots,K,\ell \neq i  \right\}$ is the collection of all the backhaul 
messages $m_{\ell\rightarrow i}(\tau)$ received at decoder $i$ up to time $t$, and ${\cal B}^{[i,\hat i]}$ is a finite set that denotes the  
message alphabet used for the backhaul link $[i,\hat i]$. 
\item  The decoding functions 
\[ \eta_i : \mathbb{C}^{n} \times  \prod_{\substack{\ell = 1,\ldots,K \\ \ell\neq i}} 
 \left ({\cal B}^{[\ell, i]}\right )^n \rightarrow {\cal W}_{i},\;
\mbox{ that give }\; \hat W_{i}\triangleq \eta_i \left([y_{i}(\tau)]_{\tau=1}^{n}, M_{i}^{n} \right), \]
\end{itemize}
such that the corresponding probability of error given by $P_{e}^{(n)} \triangleq\mathbb{P}\left(\bigcup_{i = 1}^K \{\hat W_{i}\neq W_{i}\} \right)$ is less~than~$\epsilon$, and the average backhaul cooperation rate satisfies the backhaul load constraint 
$$\overline R_{\rm b}=\frac{1}{K}\sum_{i=1}^{K}\sum_{\hat i\neq i}\frac{1}{n}H \bigl([m_{i\rightarrow \hat i}(\tau)]_{\tau=1}^{n}\bigr)\leq L.$$

\begin{defn}[Capacity Region]
The  \emph{capacity region} ${\mathcal C}_L$ is defined as the closure of the set of all the rate tuples $(R_{1},R_{2},...,R_{K})$ that are achievable with an average backhaul cooperation rate $\overline R_{\rm b}\leq L$.
\end{defn}

\begin{remark}
The region ${\mathcal C}_{0}$ coincides with the capacity region of the $K$-user Gaussian interference channel (no cooperation) and the region ${\mathcal C}_{\infty}$ with the capacity region of the $K$-user Gaussian MIMO multiple access channel with $K$ receive antennas (full cooperation).
\end{remark}

As the transmit power $P$ increases, it is reasonable to let also the backhaul rate constraint $L$ increase as some function $L(P)$. 
Then, for an achievable scheme  we are interested in characterizing the tradeoff between the 
average (per user) backhaul cooperation load given by 
\begin{equation}
\alpha\triangleq \lim_{P\rightarrow\infty } \frac{L(P)}{\log(P)},
\end{equation}
and the average (per user) achievable degrees of freedom (DoF) given by 
\begin{equation}
 \msf{DoF}(\alpha) \triangleq \liminf_{P\rightarrow\infty}  \frac{1}{K\log(P)}\sum_{k=1}^{K} R_{K}.
\end{equation}

The  \emph{average DoF (per user)} of the channel is denoted by $\msf{DoF}^{*}(\alpha)$ and  defined as the supremum of $\msf{DoF}(\alpha)$. 
%

{
\begin{remark}
Notice that when $\alpha = 0$, 
the average degrees of freedom 
$ \msf{DoF}(0) = 1/2$
can be achieved (without any cooperation) by interference alignment. On the other hand, when $\alpha = \infty$ (full cooperation) the average degrees of freedom is
$\msf{DoF}(\infty) = 1$
can be achieved by jointly decoding the $K$ received observations.
\end{remark}
}

\subsection{Example: Centralized processing}
\label{sec:central}

%
%
%
%
%


Under this framework, we can designate a specific receiver, say receiver~1, to take the role of the centralized processor and let all the other receivers quantize (within a constant distortion) and forward their observations to it. Now receiver~1 can jointly process all the observations to decode both its own message and the other the $K-1$ messages and subsequently forward the $K-1$ decoded messages back to their intended receivers (see Fig.~\ref{scheme2}). As we can see this scheme is able to achieve the full  DoF  of $1$ with backhaul cooperation load $\alpha=2\frac{(K-1)}{K}$. If we time-share between this scheme and the asymptotic interference alignment scheme that can achieve $\msf{DoF}(0)=1/2$,  we can obtain  the boundary shown with the dashed line in Fig.~\ref{fig:dofalpha}.
\begin{figure}[ht]

                \centering
                \includegraphics[width=0.8\columnwidth]{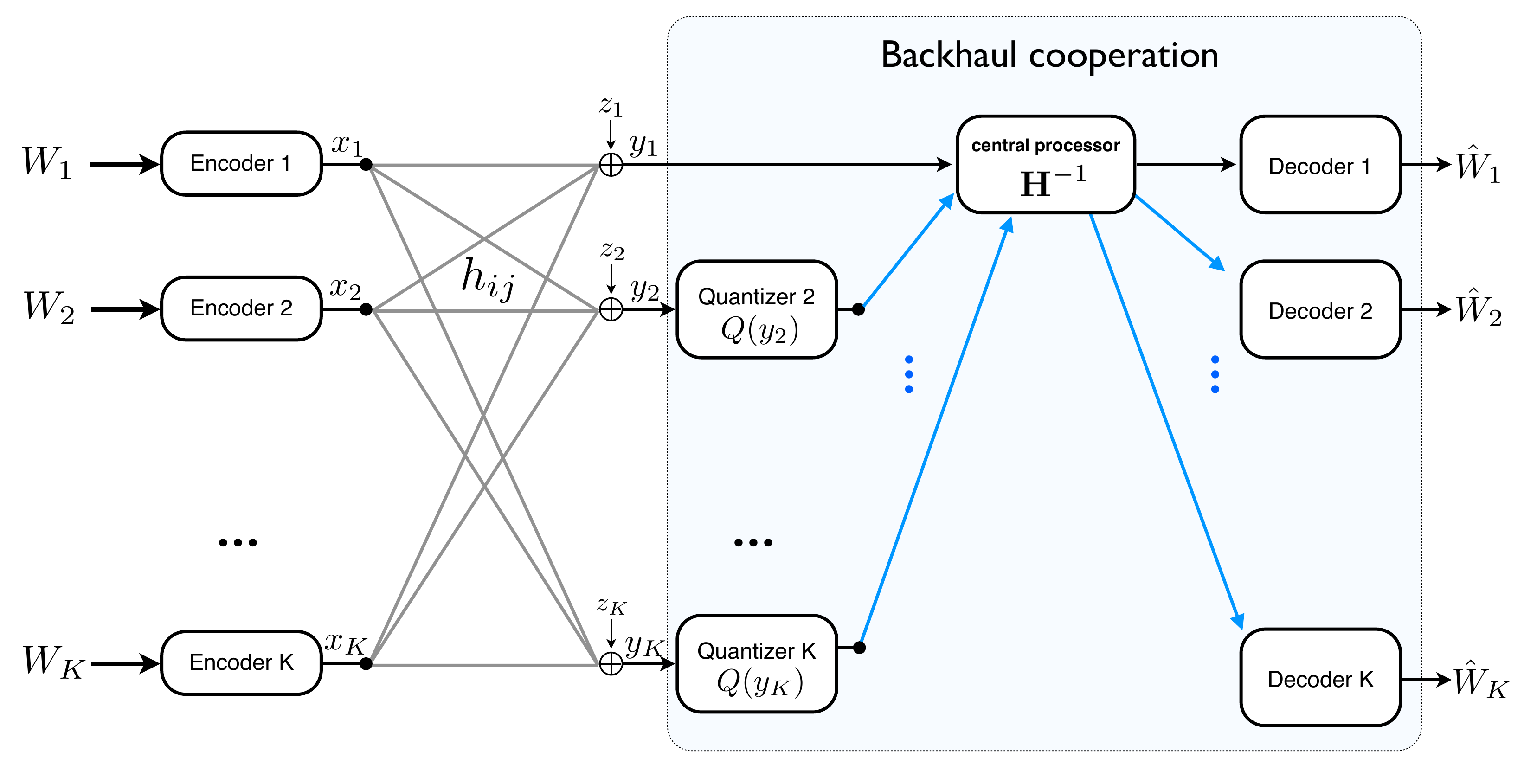}
                \caption{A simple scheme to achieve $1$ DoF per user with $\alpha=2\frac{(K-1)}{K}$}
                \label{scheme2}

\end{figure}

\section{Cooperation at the Receiver Side: Main Results}\label{sec:main}

Our main results on the communication vs cooperation tradeoff for the channel model introduced in the previous section, are described in the following theorems.


\begin{thm}[Upper Bound]
\label{thm:out}
 In the $K$-user interference channel with average backhaul load $\alpha$, we have that  \vspace{-0.05in}
\begin{equation}\msf{DoF}^*(\alpha) \leq \min\left\{1,\frac{1+\alpha}{2}\right\}.\end{equation}
\end{thm}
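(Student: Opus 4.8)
The plan is to establish the bound $\msf{DoF}^*(\alpha) \le \min\{1,\frac{1+\alpha}{2}\}$ by combining two elementary observations: (i) a single-user cut-set bound that gives $\msf{DoF}^*(\alpha)\le 1$ regardless of cooperation, and (ii) a genie-aided argument on a pair of users that trades off the communication rate against the backhaul traffic entering those users, yielding the $\frac{1+\alpha}{2}$ term. For (i), fix any user $i$ and observe that even if receiver $i$ is given \emph{all} backhaul messages for free and all interfering codewords as side information, the channel from transmitter $i$ to receiver $i$ is a point-to-point AWGN channel of capacity $\log(P)+o(\log P)$, so $R_i \le \log(P)+o(\log P)$; averaging over $i$ gives $\msf{DoF}(\alpha)\le 1$.

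For (ii) I would pick a pair $\{i,j\}$ and run Fano's inequality at a ``super-receiver'' that is handed $y_i^n$ together with all incoming backhaul messages $\{m_{\ell\to i}(\tau)\}_{\tau\le n,\ell\ne i}$ and $\{m_{\ell\to j}(\tau)\}$, and (as a genie) is also given $y_j^n$. This super-receiver can reconstruct $\hat W_i$ from $(y_i^n,M_i^n)$ and $\hat W_j$ from $(y_j^n,M_j^n)$, so by Fano
\[
n(R_i+R_j) \le I\bigl(W_i,W_j;y_i^n,y_j^n,M_i^n,M_j^n\bigr)+n\epsilon_n.
\]
Now bound the right side in two pieces. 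The mutual information contributed by the two received signal vectors $y_i^n,y_j^n$ is at most the capacity of a $2\times K$ MIMO link driven by power-$P$ inputs, which is $2\log(P)+o(\log P)$ per channel use — this is where the constant ``$1$'' in $1+\alpha$ comes from. The mutual information contributed by the backhaul messages $M_i^n,M_j^n$ is at most their joint entropy, hence at most $\sum_{\ell\ne i}H([m_{\ell\to i}]_{\tau=1}^n)+\sum_{\ell\ne j}H([m_{\ell\to j}]_{\tau=1}^n)$, i.e. at most $n$ times the total backhaul rate into $\{i,j\}$. Dividing by $2n\log(P)$ and letting $P\to\infty$, the DoF of the pair is bounded by $\tfrac12(1+\text{(normalized backhaul into }i\text{ and }j))$.

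The final step is to average this pairwise bound appropriately so that the per-user backhaul cancels out against $\alpha$. Summing the pairwise inequality over all $\binom{K}{2}$ unordered pairs, the left side becomes $(K-1)\sum_i R_i$ and on the right each directed link $[\ell,i]$ is counted once for every pair containing its head $i$, i.e. $K-1$ times, so the total backhaul term is exactly $(K-1)\sum_i\sum_{\hat i\ne i}R_{\rm b}^{[\ell,i]} = (K-1)\,K\,\overline R_{\rm b}$ plus the $(K-1)$-fold MIMO term $(K-1)\binom{K}{2}\cdot 2\log P$-type contribution; after normalizing by $K(K-1)\log P$ everything collapses to $\msf{DoF}(\alpha)\le \frac{1+\alpha}{2}$. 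Combined with (i) this gives the theorem.

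The part requiring the most care is the handling of the \emph{interactive} backhaul: because $m_{i\to\hat i}(t)$ depends causally on previously received backhaul messages at $i$, one must be careful that the entropy bound $I(W_i,W_j;M_i^n)\le H(M_i^n)$ is still valid and that $H(M_i^n)$ is genuinely controlled by the link rates defined in~\eqref{ziopino}. This is fine because $H$ is subadditive and the definition of $R_{\rm b}^{[i,\hat i]}$ in~\eqref{ziopino} is precisely $\tfrac1n H$ of the message stream through that link, so no causality subtlety actually bites — but it is the step where an over-hasty argument could go wrong, so I would state it explicitly. The MIMO upper bound on $I(\,\cdot\,;y_i^n,y_j^n)$ is standard and contributes only $o(\log P)$ beyond $2\log P$.
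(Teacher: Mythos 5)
Your decomposition into (i) a single-user bound and (ii) a genie-aided pairwise bound followed by averaging is structurally the same as the paper's, but step (ii) has a genuine quantitative gap that makes the argument fail by a factor of $2$. When you hand the super-receiver both $y_i^n$ and $y_j^n$ unconditionally and bound $I(W_i,W_j;y_i^n,y_j^n)$ by the capacity of a $2\times K$ MIMO link, you get $2\log(P)+o(\log P)$ per channel use. Your pairwise inequality is therefore
$R_i+R_j\le 2\log P + n^{-1}H(M_i^n)+n^{-1}H(M_j^n)+o(\log P)$,
and after summing over all $\binom{K}{2}$ pairs and normalizing by $K(K-1)\log P$, the wireless term contributes $1$ (not $1/2$) and the backhaul term contributes $\alpha$, so what actually ``collapses'' is $\msf{DoF}(\alpha)\le 1+\alpha$ — the trivial cut-set-style bound — not $\tfrac{1+\alpha}{2}$. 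The final arithmetic step in your writeup asserts the $1/2$ without justification; the numbers do not support it.

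The missing ingredient is what makes the paper's genie argument work: one must arrange the side information so that the two observations $y_i^n,y_j^n$ jointly contribute only \emph{one} factor of $\log P$, not two. The paper does this by giving receiver $1$ the codeword $x_2^n$ (and $W_{[3:K]}$, hence $x_{[3:K]}^n$) as side information, so that the conditional observations $y_\ell^n\mid x_2^n,W_{[3:K]}$ for all $\ell$ each carry only the single unknown $x_1^n$. Conditioning further on $y_2^n$ (which contains $h_{21}x_1^n+z_2^n$), the remaining $y_\ell^n$ become degraded copies $h_{\ell1}x_1^n+z_\ell^n$ that reduce, by a translation argument, to pure noise terms of the form $z_\ell^n - h_{\ell1}h_{21}^{-1}z_2^n$, contributing only $O(1)$ bits. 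So the wireless contribution collapses to $h(y_2^n\mid W_{[3:K]})\approx n\log P$, a single $\log P$, and the backhaul only enters through $H(M_2^{[n]})\le n\sum_{i\ne 2}R_{\rm b}^{[i,2]}$ (the term $M_1^{[n]}$ is dropped because it is a deterministic function of $y_{[1:K]}^n$, which the genie already provides). Summing the resulting $R_\ell+R_{\ell+1}\le\log P + \sum_{i\ne\ell+1}R_{\rm b}^{[i,\ell+1]}+o(\log P)$ over the $K$ cyclic pairs gives $2\sum_kR_k\le K\log P + K\overline R_{\rm b}+o(\log P)$, hence $2\,\msf{DoF}(\alpha)\le 1+\alpha$. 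Your treatment of the interactive backhaul (subadditivity of $H$, $R_{\rm b}^{[i,\hat i]}=n^{-1}H$ of the link stream) is fine and is not where the problem lies; the problem is that a bare $2$-antenna MIMO genie is too generous to certify the constant $1$ in $\frac{1+\alpha}{2}$.
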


This outer-bound is derived based on considering every pair of links in the network, and 
developing a bound on communication versus cooperation tradeoff between these two, while the remaining links are 
effectively eliminated from the system (by a genie giving their messages to both receivers). We refer the reader to Appendix~\ref{proof:conv} for the detailed proof.

\begin{remark}Notice that Theorem~\ref{thm:out} shows that for the $K$-user interference channel with average backhaul load $\displaystyle\alpha= \lim_{P\rightarrow\infty } {L(P)}/{\log(P)}=0$, the per user DoF are bounded by $\msf{DoF}^*(0)\leq 1/2$. This matches the well known DoF outer bound for the $K$-user interference channel (without cooperation), and implicitly shows that even when cooperation rates are allowed to scale as $L(P)=o(\log(P))$,  there is no cooperation scheme that can increase the DoF achievable by interference alignment.    
\end{remark}

\begin{thm}[Achievability]  
\label{thm:ach}
In the three-user interference channel with average backhaul load $\alpha$,  we have that \vspace{-0.05in}
 \begin{equation}
 \msf{DoF}^*(\alpha) \geq \min\left\{1,\frac{1+\alpha}{2}\right\}.
 \end{equation}
\end{thm}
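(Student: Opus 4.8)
The plan is to establish the two endpoints of the tradeoff and then interpolate by time-sharing. The point $\alpha = 0$, $\msf{DoF} = 1/2$, is achieved by standard asymptotic interference alignment over the three-user interference channel (Cadambe--Jafar), with no backhaul use whatsoever. The crux is therefore the point $\alpha = 1$, $\msf{DoF} = 1$: we must show that each of the three receivers can decode its message with one full DoF while the total backhaul traffic, averaged over the three receivers, costs only $\log(P) + o(\log(P))$ bits per channel use --- i.e., a budget of $3\log(P)$ bits total, split across the six directed links. Once both endpoints are in hand, convexity of the achievable region (achieved by concatenating blocks that run one scheme or the other in the appropriate fractions) gives every point on the segment joining $(0,1/2)$ to $(1,1)$, and the plateau at $\msf{DoF} = 1$ for $\alpha \ge 1$ is trivially maintained by ignoring the excess backhaul. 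This yields $\msf{DoF}^*(\alpha) \ge \min\{1, (1+\alpha)/2\}$ for all $\alpha \ge 0$.

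For the $\alpha = 1$ corner I would build the \emph{cooperation alignment} scheme sketched in the introduction. The idea is to partition each transmitter's codeword into many small "layers" and proceed iteratively: in each round every receiver decodes one more layer of its own message, and then forwards over the backhaul a \emph{carefully chosen linear combination} of quantities derived from its received signal and from messages it has already received. The key structural point --- the analogue of dimension-counting in interference alignment --- is that the combinations can be chosen so that, when receiver $i$ eventually needs to cancel its interference term $h_{ij}x_j + h_{ik}x_k$, the missing information arrives as a \emph{single} scalar combination per layer rather than two separate ones; the backhaul "aligns" the two interferers into one dimension, so from receiver $i$'s viewpoint it is as though one genie processed the whole network and handed over just the residual $\log(P)$ bits it truly needs. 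Quantization is done within bounded mean-square distortion (bounded independent of $P$), which is why working at the level of DoF is legitimate: the $o(\log(P))$ rate overhead for describing each quantized combination, and the finite number of rounds, do not affect the leading term. Summing the per-round, per-link rates and dividing by three must come out to $\log(P) + o(\log(P))$.

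The main obstacle --- and the genuinely new content --- is constructing the iterative decode-and-forward schedule and proving that the alignment it requires is simultaneously \emph{feasible} and \emph{rate-efficient}. Feasibility means: at the start of each round, each receiver must actually possess enough side information (its own observation plus previously received backhaul messages) to decode the next layer, which forces a consistency condition linking (i) which layers have been decoded where, (ii) which combinations have been exchanged, and (iii) the invertibility of the resulting system of equations at each receiver for generic channel gains $h_{ij}$. Rate-efficiency means the total over six links stays within the $3\log(P)$ budget --- there is no slack, so every exchanged combination must be information about interference that some receiver genuinely cannot reconstruct on its own; any redundancy breaks the bound. I expect the argument to set up a small finite-dimensional linear-algebra "schedule matrix" over the layer indices, verify by a generic-rank argument that each receiver's decoding equations have full rank, and then do a bookkeeping count of exchanged symbols. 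The convergence (that finitely many rounds suffice to recover an arbitrarily large fraction of each message) and the tightness of the rate count against the outer bound of Theorem~\ref{thm:out} are what make this the delicate step; the interpolation and the plateau are routine by comparison.
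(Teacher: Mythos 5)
Your outline gets the reduction right: establish the two corner points $(\alpha,\msf{DoF})=(0,1/2)$ (by asymptotic interference alignment, no backhaul) and $(1,1)$, time-share for $0<\alpha<1$, and note that $\msf{DoF}=1$ persists trivially for $\alpha>1$. This is exactly the paper's framing of the proof, and the interpolation/plateau step is indeed routine. The gap is that you never actually construct the $\alpha=1$ scheme. You correctly single out the feasibility of an iterative decode/recombine/forward schedule and the tightness of the rate count as ``the delicate step'' and ``the genuinely new content,'' but what follows is a forecast of what such a construction ought to look like (a ``schedule matrix'' and a ``generic-rank argument''), not the construction. As written, the existence of a consistent schedule whose six per-link rates sum to $3\log P+o(\log P)$ while every receiver decodes at full DoF is asserted, not shown, so the central endpoint of the theorem remains unproved.

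It is also worth noting that the paper's actual construction is quite rigid and does not resemble a generic-rank argument. Each transmitter uses the real/complex interference-alignment modulation of Motahari et al.: the message is split into $N^9$ integer-constellation sub-streams indexed by $\mathbf{s}\in\{1,\dots,N\}^9$ and modulated onto monomials $\nu_{\mathbf{s}}=\prod_{i,j}h_{ij}^{s_{ij}}$. This forces the received signal at receiver $i$, after ML detection, to decompose into sums $r_{i,\mathbf{s}}$ of exactly three sub-streams whose index vectors differ by a unit shift, and---crucially---creates ``boundary'' sub-streams (e.g., $s_{31}=N+1$) at which $r_{3,\mathbf{s}}$ reduces to a \emph{single} sub-stream of user~1 with no interference. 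That interference-free boundary seeds the recursion: $N$ rounds along the index $s_{31}$, with explicitly specified backhaul message sets $\mathcal{M}_{3\to 1}^{[r]},\mathcal{M}_{1\to 2}^{[r]},\mathcal{M}_{2\to 3}^{[r]}$ whose interfering components match, by integer bookkeeping over the index lattice, the sums the recipient already holds. The alignment is thus \emph{exact by construction}, not generic; one then counts $3N^9$ exchanged $\mathbb{Z}_{3Q}$-valued symbols to verify $\alpha\to 1$ as $N\to\infty$. Your sketch would need either this explicit monomial/index-lattice machinery or a concrete substitute satisfying the two properties you flag; without one, the proof of $\msf{DoF}(1)=1$ is missing.
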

\vspace{-0.1in}
\begin{figure}[ht]
                \centering
                \includegraphics[width=0.8\columnwidth]{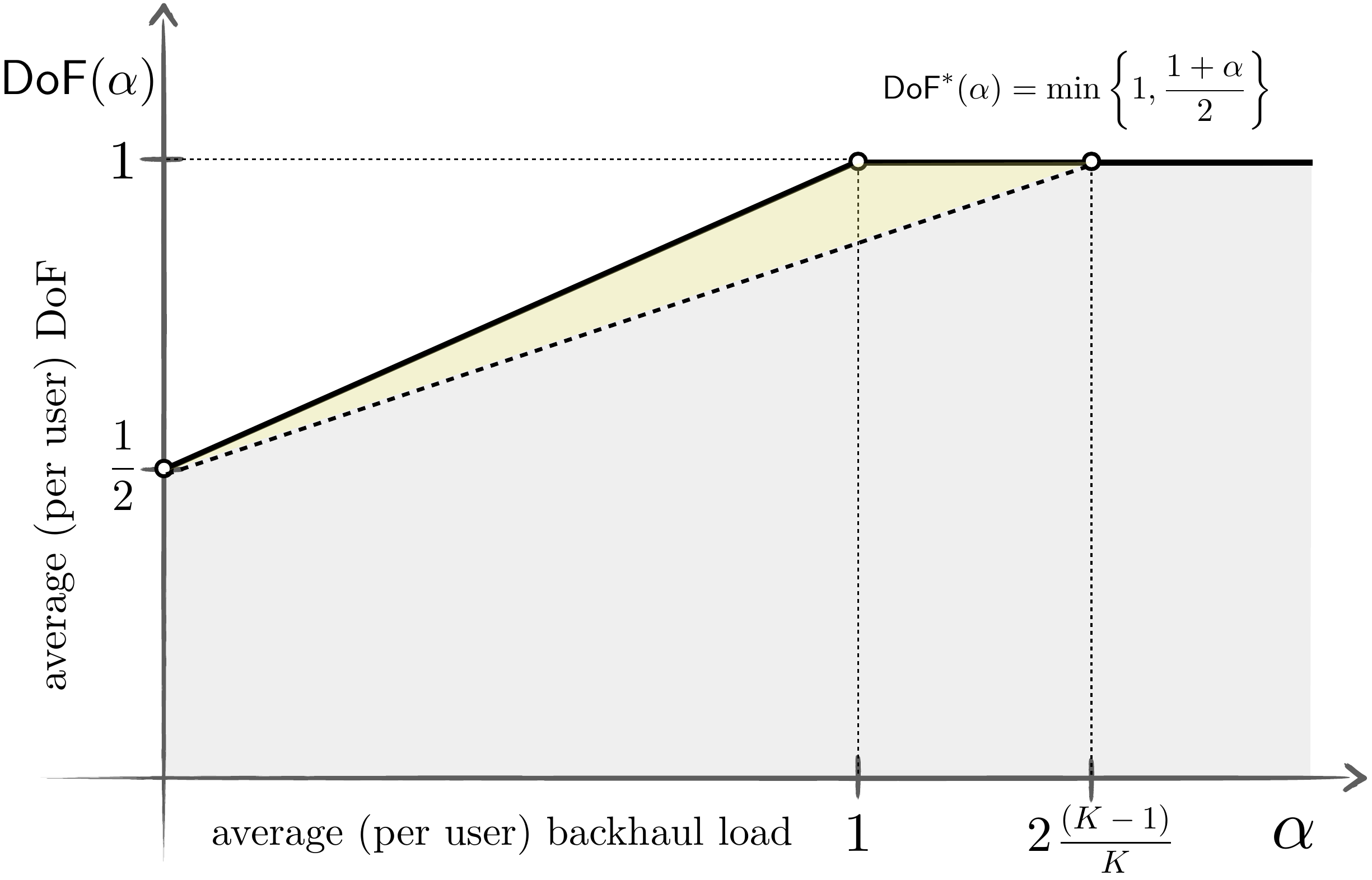}
                \caption{Achievable DoF vs $\alpha$: The dashed line corresponds to the achievable tradeoff for $K$-user channels by centralized processing (Section~\ref{sec:central}) and the yellow  region shows the corresponding gap from the $K$-user outer bound of Theorem~\ref{thm:out}. In Theorem~\ref{thm:ach} we show that \emph{cooperation alignment} is able to close this gap and achieve the optimal communication vs cooperation tradeoff for $K=3$.}
                \label{fig:dofalpha}
\end{figure}

Theorems~\ref{thm:out} and \ref{thm:ach} characterize  $\msf{DoF}^*(\alpha)$ for the three-user interference channel as   $\msf{DoF}^*(\alpha) = \min\left\{1,\frac{1+\alpha}{2}\right\}$.
Recall that, as observed in the introduction, for two-user interference channels, the optimum DoF per user is again $\msf{DoF}^*(\alpha) = \min\left\{1,\frac{1+\alpha}{2}\right\}$, and it is achievable by time-sharing orthogonal access (for $\alpha = 0$) and 
quantize and forward ($\alpha = 1$). 


The question is why for the three-user case, we are able to achieve the same tradeoff.  For $\alpha=0$, the answer is well-understood. We can achieve $1/2$ DoF per user  with interference alignment.  
However, it seems surprising that the DoF per user is the same for other values of $\alpha$. This result basically says that not only in the wireless channel, the extra link does not affect the DoF per user (due to interference alignment), but also in the backhaul, the load of collaboration per user  does not scale with the number of users. The reason is that in the backhaul, we implement another form of alignment,  
 which we call \emph{cooperation alignment}, that is able to  hide the additional collaboration load that is due to the extra link. 
 
\subsection*{Illustrating Example:} 

 To describe the main idea behind  cooperation alignment that we will later use in our achievability proof,  we consider here a specific, yet illustrating, three-user interference channel with $h_{31}=\gamma h_{21}$ and $h_{33}=\gamma h_{23}$, $\gamma\in \CC$, given by
\begin{align}
y_{1} &= \phantom{\gamma}h_{11}x_{1} +  h_{12}x_{2}+  \phantom{\gamma}h_{13}x_{3} +z_{1},\nonumber\\
y_{2} &= \phantom{\gamma} h_{21}x_{1}+ h_{22}x_{2} +  \phantom{\gamma} h_{23}x_{3} +z_{2},\nonumber\\
y_{3} &= \gamma h_{21}x_{1}+ h_{32}x_{2} +  \gamma h_{23}x_{3} +z_{3}.\nonumber
\end{align}
This example is constructed such that the channel coefficients of $x_1$ and $x_3$ are aligned at receivers two and three, which allows us to implement and explain the cooperation alignment scheme in a simple way. 

Let us assume that each transmitter uses a Gaussian codebook, carrying one DoF. We aim to show that each receiver is able to decode its own message, with backhaul load of $\alpha=1$. 
%
In the above example, receiver~3 can first form the backhaul message $m_{3\rightarrow2}$ as the quantized version of $y_{3}$ (with a constant distortion) and send it to
receiver~2. Since $x_1$ and $x_3$ are aligned in $m_{3\rightarrow2}$ and $y_2$,   receiver~2 is able to decode $x_{2}$ at full rate\footnote{Here we assume that a lattice vector quantizer with dither is used, thus  quantization can be modeled with an additive independent quantization noise.} by subtracting $m_{3\rightarrow2}$ from $\gamma y_{2}$.   Notice that at this point receiver~2 can also extract the term $h_{21}x_{1}+h_{23}x_{3}$ from its observation. In order to help receiver~1  decode $x_1$, receiver~2 can now combine $x_2$ and $h_{21}x_{1}+h_{23}x_{3}$ into a single message $m_{2\rightarrow1}$ as the quantized version of 
$h_{12}x_{2}+\frac{h_{13}}{h_{23}}(h_{21}x_{1}+h_{23}x_{3})$.  This combination is formed such that  $x_2$ and $x_3$ in $y_1$ and  $m_{2\rightarrow1}$ are \emph{aligned}.  Sharing $m_{2\rightarrow1}$   over the backhaul will therefore help  receiver 1   decode $x_{1}$ at full rate and subsequently extract the interfering term $h_{12}x_{2}+  h_{13}x_{3}$. In a similar fashion, receiver 1 can recombine the newly available terms $x_{1}$ and $h_{12}x_{2}+  h_{13}x_{3}$ into the  message $m_{1\rightarrow3}$  as the quantized version of  $\gamma h_{21}x_{1} + \frac{h_{32}}{h_{12}}(h_{12}x_{2}+  h_{13}x_{3})$ to help receiver~3 decode $x_{3}$ as well.
This cooperative process, in which receivers iteratively decode desired messages, recombine interfering terms  and share aligned backhaul messages that help another receiver  decode, is what we refer to as \emph{cooperation alignment}. 
\hfill $\lozenge\vspace{0.09in}$

In the above example, we could start the iteration because the combination of the signals  observed at receiver~3 ($\gamma h_{21}x_{1}+\gamma h_{23}x_{3}$) was already aligned with the interference observed at receiver~2. Therefore, by quantizing and forwarding $y_{3}$ to receiver~2, the latter could immediately eliminate its interference and decode $x_{2}$ with $1$ DoF. 
 The major challenge in the three-user IC with generic channel coefficients is that it would have been impossible to 
  find a starting point for the above process since the corresponding channel coefficients are distinct in all receivers with probability~$1$.
 However -- as we will see in the next section-- we are able to  create this form of alignment asymptotically by  
splitting the data streams into many sub-streams, each carrying a small fraction of the total DoF. 
The iterative approach is then started from a vanishing fraction of sub-streams that do not have any interference.

\section{Cooperation Alignment (Proof of Theorem~\ref{thm:ach})}
\label{achiev}

In order to show that $\msf{DoF}(\alpha) = \min\{1,\frac{1+\alpha}{2}\}$, it suffices to show the achievability of the two corner points that give $\msf{DoF}(0) = 1/2$ and $\msf{DoF}(1) = 1$. Then, by time-sharing between the achievable schemes for $\msf{DoF}(0)$ and $\msf{DoF}(1)$ we can obtain the entire boundary $\msf{DoF}(\alpha)=(1-~\hspace{-0.07in}\alpha)\frac{1}{2} + \alpha = \frac{1+\alpha}{2}$, for any $\alpha\in [0,1]$. Further, for any $\alpha>1$, the point $\msf{DoF}(1)$ is trivially achievable and hence one can show that $\msf{DoF}(\alpha) = \min\{1,\frac{1+\alpha}{2}\}$ for all $\alpha\geq 0$.
Since  the point $\msf{DoF}(0) = 1/2$ can be  achieved asymptotically -- without any cooperation -- by interference alignment \cite{cj08,mgmk09}, 
for the rest of this proof we will focus on the achievability of the point $\msf{DoF}(1)=1$, i.e., the achievability of $1$ DoF per user with average backhaul load $\alpha=1$.

\subsection{The achievability of $\msf{DoF}(1) = 1$}\label{proof4uplink}

We first define some short-hand notations. For a natural number $N \in \mathbb{N}$, let $s_{ij}\in \{1,...,N\}$, $i,j\in\{1,2,3\}$, and 
$${\bf s}\triangleq[s_{11},s_{12},s_{13},s_{21},s_{22},s_{23},s_{31},s_{32},s_{33}].$$ 
Then, clearly ${\bf s} \in {\cal S}_{N}\triangleq \{1,..,N\}^{9}$. In addition, for  ${\bf s} \in {\cal S}_{N}$, we define the monomial $\nu_{\bf s}$, as  $\nu_{\bf s}= \prod_{i,j} h_{ij}^{s_{ij}}$. Furthermore, for a positive real number $Q$, we define $\mathbb{Z}_{Q}\triangleq\mathbb{Z}\cap [-Q,Q]$, i.e. 
the integer numbers between $-Q$ and $Q$. 

To form the transmit signals, we use the methodology proposed in~\cite{mgmk09} for real interference alignment, and its extension 
for complex channels in~\cite{mohammad-complex}. In addition, here, we follow the modulation technique used 
in \cite{nw12} to achieve DoF of $K$ for compute and forward (but with different encoding and decoding schemes).
  At transmitter one, the message  $W_{1}$ is split into $N^{9}$ sub-messages, for some $N \in \mathbb{N}$.  Each sub-message is then coded, with  a rate that will be specified later, and modulated over the integer constellation  $\mathbb{Z}_{Q}\triangleq\mathbb{Z}\cap [-Q,Q]$ to form a sub-stream.  Each sub-stream of message $W_1$ is indexed by a unique ${\bf s}\in {\cal S}_{N}$ and denoted by $\{a_{\bf s}(t)\}_{t=1}^n$. The transmitter one at time $t$ sends a  weighted linear combination of sub-streams $a_{\bf s}(t)$, and  scaled by $\Gamma$, as \vspace{-0.05in}
\begin{equation}\textstyle
x_{1}(t)= \Gamma \cdot\sum_{{\bf s}\in {\cal S}_{N}} \nu_{\bf s}  a_{\bf s}(t). \vspace{-0.03in}
\end{equation}
Recall that  $\nu_{\bf s}= \prod_{i,j} h_{ij}^{s_{ij}}$.  The scaling factor $\Gamma$ guarantees that the power constraint is satisfied and will be determined later. We apply the same 
scheme at transmitters two and three to respectively form sub-streams $\{ b_{\bf s}(t)\}_{t=1}^n$ and $\{c_{\bf s} (t)\}_{t=1}^n$, ${\bf s}\in {\cal S}_{N}$, 
and transmit signals $x_{2}(t)= \Gamma\cdot \sum_{{\bf s}\in {\cal S}_{N}} \nu_{\bf s}  b_{\bf s} (t)$ and $x_{3}(t)= \Gamma \cdot\sum_{{\bf s}\in {\cal S}_{N}} \nu_{\bf s}  c_{\bf s}(t)$. 
For simplicity of exposition, in the rest of the proof we drop the time index $t$, unless it is required for clarification.
One can see that, at time $t$, the corresponding received observations for $i\in\{1,2,3\}$ are given by
\begin{equation}\textstyle
y_{i} = \Gamma\cdot \sum_{{\bf s}\in {\cal S}_{N+1}} \nu_{\bf s} \cdot
r_{i,\bf s}  + z_{i}, 
\label{recobs}
\end{equation}
where 
\vspace{-0.26in}
\begin{align}
r_{1,\bf s}\;\triangleq  \;\;&a_{(s_{11}{-1}),s_{12},s_{13},s_{21},s_{22},s_{23},s_{31},s_{32},s_{33}} \nonumber\\
+&b_{s_{11},(s_{12}{-1}),s_{13},s_{21},s_{22},s_{23},s_{31},s_{32},s_{33}}\nonumber\\
+&c_{s_{11},s_{12},(s_{13}{-1}),s_{21},s_{22},s_{23},s_{31},s_{32},s_{33}},
\label{obs1}
\end{align}
\begin{align}
r_{2,\bf s}\;\triangleq  \;\;&a_{s_{11},s_{12},s_{13},(s_{21}{-1}),s_{22},s_{23},s_{31},s_{32},s_{33}}\nonumber\\+
&b_{s_{11},s_{12},s_{13},s_{21},(s_{22}{-1}),s_{23},s_{31},s_{32},s_{33}}\nonumber\\+
&c_{s_{11},s_{12},s_{13},s_{21},s_{22},(s_{23}{-1}),s_{31},s_{32},s_{33}},
\label{obs2}
\end{align}
\begin{align}
r_{3,\bf s}\;\triangleq  \;\;&a_{s_{11},s_{12},s_{13},s_{21},s_{22},s_{23},(s_{31}{-1}),s_{32},s_{33}}\nonumber\\+
&b_{s_{11},s_{12},s_{13},s_{21},s_{22},s_{23},s_{31},(s_{32}{-1}),s_{33}}\nonumber\\+
&c_{s_{11},s_{12},s_{13},s_{21},s_{22},s_{23},s_{31},s_{32},(s_{33}{-1})}. 
\label{obs3}
\end{align}
In addition, for simplicity, in (\ref{recobs}), (\ref{obs1}), (\ref{obs2}), and (\ref{obs3}), for any ${\bf s} \notin {\cal S}_{N}$, we assume that $a_{\bf s}=b_{\bf s}=c_{\bf s}=0$, and we follow this assumption throughout the proof. We further note that $r_{i,\bf s} \in \mathbb{Z}_{3Q}$ for $i \in \{1,2,3\}$ and ${\bf s}\in {\cal S}_{N+1}$. If we let $I=(N+1)^9$, $ Q  = \frac{1}{3}P^{(1-\varepsilon)/(I+2\varepsilon)}$, and $\Gamma = c_1 P^{(I-2+4\varepsilon)/(2(I+2\varepsilon))}$, for some positive constant $c_1$ and $\varepsilon$, we can show that the power constant at each transmitter is satisfied~\cite{mohammad-complex}. 
At time $t$, in each receiver $i$, we apply  Maximum Likelihood  (ML) detection to estimate $r_{i,\bf s} (t)$, for all  ${\bf s}\in {\cal S}_{N+1}$ from the received signal $y_i(t)$. Notice that the above detection could be erroneous  for  some receivers.
For now, let us focus on a specific time $t$, where $r_{i,\bf s} (t)$, ${\bf s}\in {\cal S}_{N+1}$ are correctly detected for all receivers and we will discuss the probability of error and its effect on the achievable rates later. 

In the rest of this proof we aim to show that the receivers one, two, and three can respectively resolve desired symbols
$a_{\bf s}$, $b_{\bf s}$ and $c_{\bf s}$  for all ${\bf s}\in {\cal S}_{N}$ based on the already individually detected sums $r_{i,\bf s}$ by successively exchanging and processing   information over the backhaul.  For convenience, in the notation to follow, we will denote addition in the sub-message 
vector 
indices ${\bf s}\in {\cal S}_{N}$ with corresponding superscripts; e.g, $a_{(s_{11}-1),s_{12},...,s_{33}}$ in (\ref{obs1}) will be 
written as $a_{s_{11}^{-1},s_{12},...,s_{33}}$. 

To start unraveling $r_{i,\bf s}$ for all ${\bf s} \in {\cal S}_{N+1}$,  \mbox{$i\in\{1,2,3\}$}, and resolve the  desired symbols at each receiver, we start from the boundaries as follows. 
Notice that for all \mbox{${\bf s'} \in \{{\bf s}\in {\cal S}_{N+1} : s_{31}=N+1\}$}, we have that $r_{3,\bf s'} = a_{s_{11},s_{12},s_{13},s_{21},s_{22},s_{23},N,s_{32},s_{33}},$ and hence receiver 3 has resolved some of  the symbols of receiver 1  without interference. Hence receiver 3 can give these symbols to receiver 1 directly over the backhaul.
The result of this message passing step is that receiver 1 knows its desired symbols $a_{\bf s}$, for all ${\bf s} \in  {\cal S}_{N}$ with $s_{31}=N$.    The corresponding backhaul rate that has been used for this step is given by $N^{8}\log(2\lfloor Q \rfloor +1)$.

Now receiver 1 can subtract $a_{s_{11},s_{12},s_{13},s_{21},s_{22},s_{23},N,s_{32},s_{33}}$ from its corresponding  observations in (\ref{obs1}) and obtain the  interference terms
\vspace{-0.1in} 
\begin{equation}b_{s_{11},s_{12},s_{13},s_{21},s_{22},s_{23},N,s_{32},s_{33}}+
c_{s_{11},s_{12}^{+1},s_{13}^{-1},s_{21},s_{22},s_{23},N,s_{32},s_{33}}, 
\label{rx1:2}\end{equation}
for all ${\bf s}\in {\cal S}_{N}$ with $s_{31}=N$.
In order to help receiver 2, receiver 1 will form the sums
\begin{align} &b_{s_{11},s_{12},s_{13},s_{21},s_{22},s_{23},N,s_{32},s_{33}}\nonumber\\+ 
&c_{s_{11},s_{12}^{+1},s_{13}^{-1},s_{21},s_{22},s_{23},N,s_{32},s_{33}} \nonumber\\+ &a_{s_{11},s_{12}^{+1},s_{13}^{-1},s_{21}^{-1},s_{22},s_{23}^{+1},N,s_{32},s_{33}}\;,
\label{sumsfrom1}\end{align}  
by adding the  symbols $a_{s_{11},s_{12}^{+1},s_{13}^{-1},s_{21}^{-1},s_{22},s_{23}^{+1},N,s_{32},s_{33}}$ to the interference terms in (\ref{rx1:2}), and give them to receiver 2 over the backhaul. 
From (\ref{obs2}) we can see that receiver 2 has already detected the sums\vspace{-0.1in} 
\begin{align} 
&a_{s_{11},s_{12}^{+1},s_{13}^{-1},s_{21}^{-1},s_{22},s_{23}^{+1},N,s_{32},s_{33}}\nonumber\\+ 
&b_{s_{11},s_{12}^{+1},s_{13}^{-1},s_{21},s_{22}^{-1},s_{23}^{+1},N,s_{32},s_{33}}\nonumber\\+
&c_{s_{11},s_{12}^{+1},s_{13}^{-1},s_{21},s_{22},s_{23},N,s_{32},s_{33}}
\end{align}
and hence subtracting them from (\ref{sumsfrom1}) will create the terms
\begin{equation}
b_{s_{11},s_{12},s_{13},s_{21},s_{22},s_{23},N,s_{32},s_{33}}\hspace{-0.02in} -  b_{s_{11},s_{12}^{+1},s_{13}^{-1},s_{21},s_{22}^{-1},s_{23}^{+1},N,s_{32},s_{33}},
\label{user2}
\end{equation}
from which receiver 2 can successively resolve its desired symbols $b_{\bf s}$ for all ${\bf s} \in  {\cal S}_{N}$ with $s_{31}=N$.\footnote{The corresponding recursion can be performed in $N$ iterations on the index $s_{23}$; letting $i=1,...,N$ and setting $s_{23}=N-i+1$, we can see that in the $i$th step, receiver 2 can successfully resolve  $b_{s_{11},s_{12},s_{13},s_{21},s_{22},N-i+1,N,s_{32},s_{33}}$, for all ${\bf s} \in  {\cal S}_{N}$ with $s_{23}=N-i+1$ and $s_{31}=N$. } The backhaul load  for this step is  equal to $N^{8}\log(6\lfloor Q \rfloor+1)$. 

Now we can already see that since receiver~2 knows $b_{s_{11},s_{12},s_{13},s_{21},s_{22},s_{23},N,s_{32},s_{33}}$, it can extract from (\ref{obs2}) the {interference} terms 
\vspace{-0.1in}
\begin{equation}
c_{s_{11},s_{12},s_{13},s_{21},s_{22},s_{23},N,s_{32},s_{33}}+ a_{s_{11},s_{12},s_{13},s_{21}^{-1},s_{22},s_{23}^{+1},N,s_{32},s_{33}},
\label{2to3N}
\end{equation}
for all ${\bf s}\in {\cal S}_{N}$ with $s_{31}=N$. Giving the above terms to receiver 3 over the backhaul, will help it to decode $c_{\bf s}$ for all ${\bf s} \in  {\cal S}_{N}$ with $s_{31}=N$, because  receiver 3 already has $a_{\bf s}$ for all ${\bf s} \in  {\cal S}_{N}$ with $s_{31}=N$. 
Since the number of symbols that have been exchanged in (\ref{2to3N}) 
is  $N^{8}$, the backhaul rate used for this step is equal to $N^{8}\log(4\lfloor Q \rfloor+1)$. 

Knowing $c_{s_{11},s_{12},s_{13},s_{21},s_{22},s_{23},N,s_{32},s_{33}}$, receiver 3 can extract from (\ref{obs3}) the interference terms 
\vspace{-0.05in}
\begin{equation}
a_{s_{11},s_{12},s_{13},s_{21},s_{22},s_{23},(N-1),s_{32},s_{33}}+ b_{s_{11},s_{12},s_{13},s_{21},s_{22},s_{23},N,s_{32}^{-1},s_{33}}
\label{known3}
\end{equation}
and subsequently create the terms
\begin{align}
&a_{s_{11},s_{12},s_{13},s_{21},s_{22},s_{23},(N-1),s_{32},s_{33}}\nonumber\\+ &b_{s_{11},s_{12},s_{13},s_{21},s_{22},s_{23},N,s_{32}^{-1},s_{33}} \nonumber\\+ 
&c_{s_{11},s_{12}^{+1},s_{13}^{-1},s_{21},s_{22},s_{23},N,s_{32}^{-1},s_{33}} 
\label{3to1}
\end{align}
by adding $c_{s_{11},s_{12}^{+1},s_{13}^{-1},s_{21},s_{22},s_{23},N,s_{32}^{-1},s_{33}}$
to (\ref{known3})
. Giving (\ref{3to1}) to receiver 1 will help it to resolve the symbols $a_{\bf s}$ for all ${\bf s} \in  {\cal S}_{N}$ with $s_{31}=N-1$, because  receiver 1 already knows  
the interfering sums $b_{s_{11},s_{12},s_{13},s_{21},s_{22},s_{23},N,s_{32}^{-1},s_{33}} + 
c_{s_{11},s_{12}^{+1},s_{13}^{-1},s_{21},s_{22},s_{23},N,s_{32}^{-1},s_{33}}$ from (\ref{rx1:2}). Similarly, as in (\ref{sumsfrom1}) and (\ref{user2}), receiver 1 will  help receiver 2 to resolve $b_{\bf s}$ for all ${\bf s} \in  {\cal S}_{N}$ with $s_{31}=N-1$. Now that 
receiver 2 knows $b_{s_{11},s_{12},s_{13},s_{21},s_{22},s_{23},(N-1),s_{32},s_{33}}$, it can extract from (\ref{obs2}) the interference terms 
\begin{equation*}
c_{s_{11},s_{12},s_{13},s_{21},s_{22},s_{23},(N-1),s_{32},s_{33}}+ a_{s_{11},s_{12},s_{13},s_{21}^{-1},s_{22},s_{23}^{+1},(N-1),s_{32},s_{33}}
\end{equation*}
and create the terms 
\begin{align}
&c_{s_{11},s_{12},s_{13},s_{21},s_{22},s_{23},(N-1),s_{32},s_{33}}\nonumber\\+ &a_{s_{11},s_{12},s_{13},s_{21}^{-1},s_{22},s_{23}^{+1},(N-1),s_{32},s_{33}}\nonumber\\+
&b_{s_{11},s_{12},s_{13},s_{21}^{-1},s_{22},s_{23}^{+1},N,s_{32}^{-1},s_{33}}
\label{2to3}
\end{align}
by adding the  symbols $b_{s_{11},s_{12},s_{13},s_{21}^{-1},s_{22},s_{23}^{+1},N,s_{32}^{-1},s_{33}}$ 
(that are already known to receiver~2) 
in order to match the interference terms in (\ref{known3}) that are known to receiver 3. Now from (\ref{2to3}) and (\ref{known3}) receiver 3 can extract $c_{\bf s}$ for all ${\bf s} \in  {\cal S}_{N}$ with $s_{31}=N-1$ and create as in (\ref{3to1}) the terms 
\begin{align}
&a_{s_{11},s_{12},s_{13},s_{21},s_{22},s_{23},(N-2),s_{32},s_{33}}\nonumber\\+ &b_{s_{11},s_{12},s_{13},s_{21},s_{22},s_{23},(N-1),s_{32}^{-1},s_{33}}\nonumber\\+
&c_{s_{11},s_{12}^{+1},s_{13}^{-1},s_{21},s_{22},s_{23},(N-1),s_{32}^{-1},s_{33}}
\end{align}
to help receiver 1 resolve $a_{\bf s}$ for all ${\bf s} \in  {\cal S}_{N}$ with $s_{31}=N-2$.

Following the same pattern we can see that  the backhaul communication will require $N$ rounds, where in each round $r\in\{0,...,N-1\}$, the receivers can resolve their desired symbols $a_{\bf s}$, $b_{\bf s}$ and $c_{\bf s}$ for all ${\bf s} \in  {\cal S}_{N}$ with $s_{31}=N-r$.  
In round $r$ we have the following message passing steps:
\begin{itemize}
\item Receiver 3 gives to receiver 1
\begin{align}
\hspace{-0.2in}{\cal M}_{3\rightarrow 1}^{[r]}\hspace{-0.05in} =\big\{
&b_{s_{11},s_{12},s_{13},s_{21},s_{22},s_{23},(N-r+1),s_{32}^{-1},s_{33}}\\+ &c_{s_{11},s_{12}^{+1},s_{13}^{-1},s_{21},s_{22},s_{23},(N-r+1),s_{32}^{-1},s_{33}}\nonumber\\ +
&a_{s_{11},s_{12},s_{13},s_{21},s_{22},s_{23},(N-r),s_{32},s_{33}}|\mbox{\footnotesize $s_{ij} \in \{1,...,N\}$}\nonumber
\big\},
\end{align}
\item Receiver 1 gives to receiver 2
\begin{align}
\hspace{-0.2in}{\cal M}_{1\rightarrow 2}^{[r]} \hspace{-0.05in}=\big\{&a_{s_{11},s_{12}^{+1},s_{13}^{-1},s_{21}^{-1},s_{22},s_{23}^{+1},(N-r),s_{32},s_{33}} \\+&c_{s_{11},s_{12}^{+1},s_{13}^{-1},s_{21},s_{22},s_{23},(N-r),s_{32},s_{33}}\nonumber\\+
&b_{s_{11},s_{12},s_{13},s_{21},s_{22},s_{23},(N-r),s_{32},s_{33}}|\mbox{\footnotesize $s_{ij} \in \{1,...,N\}$}\nonumber
\big\},
\end{align}
\item  Receiver 2 gives to receiver 3
\begin{align}
\hspace{-0.2in}{\cal M}_{2\rightarrow 3}^{[r]}\hspace{-0.05in} =\big\{&a_{s_{11},s_{12},s_{13},s_{21}^{-1},s_{22},s_{23}^{+1},(N-r),s_{32},s_{33}} \\+&b_{s_{11},s_{12},s_{13},s_{21}^{-1},s_{22},s_{23}^{+1},(N-r+1),s_{32}^{-1},s_{33}}\nonumber\\+&
c_{s_{11},s_{12},s_{13},s_{21},s_{22},s_{23},(N-r),s_{32},s_{33}}|\mbox{\footnotesize $s_{ij} \in \{1,...,N\}$}\nonumber
\big\}.
\end{align}
\end{itemize}

The above set of symbols ${\cal M}_{i\rightarrow j}^{[r]}$ are carefully created in each round $r$ based on the available observations and symbols 
at receiver $i$ in the previous rounds such that the interfering terms match exactly the interference terms observed  at receiver~$j$. The recipient of ${\cal M}_{i\rightarrow j}^{[r]}$ is therefore enabled to extract its desired symbol by subtracting the sum of the interfering symbols without knowing their individual values. 

The  total number of symbols that  have been exchanged over the backhaul in the above scheme is given by
\begin{align}
\#\textrm{msg} &=  \sum_{r=0}^{N-1} \left(\big|{\cal M}_{1\rightarrow 2}^{[r]}\big|+\big|{\cal M}_{2\rightarrow 3}^{[r]}\big| + \big|{\cal M}_{3\rightarrow 1}^{[r]}\big|\right)
\end{align}
where each symbol is in $\mathbb{Z}_{3Q}$. Therefore, the average (per user) backhaul rate that has been used can be calculated as  
$\overline R_{\rm b} \leq \frac{ \#\textrm{msg} \cdot \log(6Q+1)}{3}$. Since $\big|{\cal M}_{1\rightarrow 2}^{[r]}\big|=\big|{\cal M}_{2\rightarrow 3}^{[r]}\big| = \big|{\cal M}_{3\rightarrow 1}^{[r]}\big| = N^{8}$, for all $r=0,1,...,N-1$, we have that $\#\textrm{msg} = 3N^{9}$. Then 
$\lim_{P \rightarrow \infty} \frac{\overline R_{\rm b}}{\log(P)} \leq  N^9 \frac{1-\varepsilon}{(N+1)^9+2\varepsilon}$, which is arbitrary close to one, given large enough $N$ and small enough~$\varepsilon$. 

For each time slot $t$ such that the ML detection of $r_{i,\bf s} (t)$, \mbox{$\forall {\bf s}\in {\cal S}_{N+1}$} at all three receivers is performed without an error, the above unraveling process guarantees that receivers one, two, and three will be able to obtain respectively the correct  $a_{\bf s}(t)$, $ b_{\bf s}(t)$, and $c_{\bf s}(t)$, for all ${\bf s}\in {\cal S}_{N}$. On the other hand, for the time slots $t$ such that error occurs in the ML detection at {\em any of the receivers}, 
 the above unraveling process will most likely deliver incorrect symbols for some of the receivers. 
Notice however that error propagation occurs across the users, because of the unraveling process, but it is confined to 
symbols transmitted in the same time slot, i.e., symbol detection errors do not propagate across time. Such detection errors can be handled 
by using standard outer coding on each data stream. Using the union bound,  we can show that at each time $t$  
the probability of error in detecting $r_{i,\bf s} (t)$,  for some ${\bf s}\in {\cal S}_{N+1}$ and some $i\in\{1,2,3\}$, is upper-bounded 
by \mbox{$p_e \triangleq 3(N+1)^9 \exp(-c_2P^{\varepsilon/2 })$}, for some constant $c_2$~\cite{mohammad-complex}.  
Using Fano's inequality (see \cite[Eq. (14)]{mgmk09}),  the achievable rate of each 
sub-stream can be lower-bounded by $\left(1-p_e\right) \log(2 Q +1) -1$, yielding the DoF of $\frac{1-\varepsilon}{(N+1)^9+2\varepsilon}$. 
Therefore, the DoF of each message can be at least  $ N^9 \frac{1-\varepsilon}{(N+1)^9+2\varepsilon}$, which is arbitrary close to one, 
given large enough $N$ and small enough $\varepsilon$.

\section{Cooperation at the Transmitter Side}  \label{transmitter-cooperation-section}

In this section we are going to extend our results for an interference channel model in which cooperation is available between the \emph{transmitters} instead of receivers. More specifically, we will consider the same backhaul connectivity model as in the previous sections but with the role of transmitters and receivers being exchanged; that is, the transmitters will be allowed to cooperate by exchanging backhaul messages in an effort to jointly encode their transmitted signals to mitigate interference, while the receivers will attempt to decode their intended messages solely based on their received observations.  
 
 As a first example one can consider again the two-user interference channel with transmitter cooperation that has been studied in \cite{wt-tx11}.   
It can be shown (using the constant gap capacity approximation of \cite{wt-tx11}) that the optimal communication vs cooperation tradeoff 
for the transmitter cooperation case with two users is  again given by $\msf{DoF}^{*}(\alpha)=\min\{1,\frac{1+\alpha}{2}\}$, and hence it matches exactly the corresponding tradeoff that we have seen for the receiver cooperation; $\msf{DoF}(0)=1/2$ can  again be achieved by orthogonal user scheduling and  $\msf{DoF}(1)=1$ can be achieved the if the two transmitters exchange their messages $W_{i}$ and precode to zero-force interference.    
Our results on transmitter cooperation are given by the following theorems, which are pleasingly symmetric to Theorems \ref{thm:out} and \ref{thm:ach}, respectively: 
    
\begin{thm}[Upper Bound]
\label{thm:out2}
 In the $K$-user interference channel with transmitter cooperation and average backhaul load $\alpha$, we have that  \vspace{-0.05in}
\begin{equation}\msf{DoF}^*(\alpha) \leq \min\left\{1,\frac{1+\alpha}{2}\right\}.\end{equation}
\end{thm}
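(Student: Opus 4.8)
The plan is to establish Theorem~\ref{thm:out2} by mirroring the converse of Theorem~\ref{thm:out}, interchanging the roles of transmitters and receivers throughout and keeping the ``pair of links'' structure of that argument. In particular, I would prove the two bounds $\msf{DoF}^*(\alpha)\le 1$ and $\msf{DoF}^*(\alpha)\le\frac{1+\alpha}{2}$ separately and intersect them.

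\textbf{Step 1 (genie reduction to two users).} Fix an arbitrary pair $(i,j)$. I would reduce the $K$-user network to a two-user interference channel with transmitter cooperation by having a genie reveal to receivers $i$ and $j$ the transmitted sequences $\{x_\ell^n\}_{\ell\neq i,j}$ of all the other transmitters. Note that, unlike in the receiver-cooperation case, revealing only the messages $\{W_\ell\}_{\ell\neq i,j}$ would not suffice here, because each $x_\ell$ is a function of the backhaul exchanges too, hence possibly of $W_i$ and $W_j$. Once this side information is available, receivers $i$ and $j$ can cancel the interference caused by the remaining transmitters, and what is left is a two-user interference channel whose transmitters cooperate through the direct links $[i,j]$, $[j,i]$ together with the incoming links $[\ell,i]$, $[\ell,j]$, $\ell\neq i,j$, which now act as one-way helpers feeding transmitters $i$ and $j$. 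Writing $C_m\triangleq\sum_{\ell\neq m}R_{\rm b}^{[\ell,m]}$ for the total backhaul rate entering node $m$, the cooperation resource available to the pair has total rate $C_i+C_j$.

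\textbf{Step 2 (pairwise bound and summation).} Following the argument underlying the two-user transmitter-cooperation converse of \cite{wt-tx11} (adapted to complex channel gains and to the presence of the extra one-way helper links), I expect to obtain, for any achievable $(R_1,\dots,R_K)$, both $R_i+R_j\leq 2\log P + o(\log P)$ and $R_i+R_j\leq \log P + \tfrac12(C_i+C_j) + o(\log P)$. The first inequality gives the trivial bound $\msf{DoF}^*(\alpha)\leq 1$. Summing the second over all $K(K-1)/2$ pairs, the left-hand side becomes $(K-1)\sum_k R_k$, while on the right each $C_m$ is counted once for every pair containing $m$, i.e. $K-1$ times, so $\sum_{i<j}(C_i+C_j)=(K-1)\sum_m C_m=(K-1)\sum_m\sum_{\ell\neq m}R_{\rm b}^{[\ell,m]}=(K-1)K\,\overline R_{\rm b}$. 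Hence $(K-1)\sum_k R_k\leq \tfrac{K(K-1)}{2}\log P + \tfrac{(K-1)K}{2}\overline R_{\rm b} + o(\log P)$; dividing by $K(K-1)\log P$, using $\overline R_{\rm b}\leq L(P)$, and letting $P\to\infty$ gives $\msf{DoF}^*(\alpha)\leq \tfrac12+\tfrac{\alpha}{2}=\tfrac{1+\alpha}{2}$. Intersecting with $\msf{DoF}^*(\alpha)\le 1$ yields the claim.

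\textbf{Main obstacle.} The delicate point is making the per-pair bound rigorous in the interactive transmitter-cooperation model. One must verify that, after the genie has revealed $\{x_\ell^n\}_{\ell\neq i,j}$, the incoming links $[\ell,i]$ and $[\ell,j]$ can be charged exactly at their nominal rates $R_{\rm b}^{[\ell,i]}$, $R_{\rm b}^{[\ell,j]}$ as cooperation resources — the interactivity across time slots must not allow them to convey more than their entropy — and that the coefficient $\tfrac12$ in front of $C_i+C_j$ (which is precisely what makes the pairwise bounds sum to $\tfrac{1+\alpha}{2}$ rather than to the weaker $\tfrac12+\alpha$) persists when the \cite{wt-tx11} argument is transcribed to this reduced channel. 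An alternative I would mention but not rely on is a reciprocity argument, transposing a transmitter-cooperation scheme into a receiver-cooperation scheme of the same DoF and backhaul load so as to invoke Theorem~\ref{thm:out} directly; but the interactive nature of the backhaul protocol makes such a transposition subtle, so the pairwise genie argument is the route I would pursue.
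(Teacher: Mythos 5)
Your high-level plan---genie reduction to a pair, a per-pair rate bound, and aggregation---is the right shape, and your observation that revealing $\{W_\ell\}_{\ell\neq i,j}$ alone is insufficient (because the $x_\ell$'s depend on the backhaul exchanges) is exactly the right concern. But the route you propose for the crucial per-pair step is not the one the paper takes, and the difference matters.

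You plan to reveal $\{x_\ell^n\}_{\ell\neq i,j}$ to the receivers and then invoke an extension of the two-user transmitter-cooperation converse of \cite{wt-tx11}, with the incoming links $[\ell,i],[\ell,j]$ accounted for as ``helpers'' at an assumed coefficient $\tfrac12$, yielding $R_i+R_j\lesssim \log P+\tfrac12(C_i+C_j)$ in terms of the \emph{incoming} backhaul rates. You yourself flag that justifying the coefficient and charging the helper links is ``the main obstacle,'' and indeed the argument you sketch does not discharge it: revealing $\{x_\ell^n\}$ makes the interference disappear but gives you no handle by which the backhaul rate can be made to appear in the bound, and there is no off-the-shelf helper-augmented version of \cite{wt-tx11} to invoke.

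The paper sidesteps this entirely by choosing a different genie. For the pair $(1,2)$ it gives receiver~1 the side information $W_2$, $W_{[3:K]}$, $y_2^n$, and---this is the key choice---$\tilde M_1^{[n]}$, the collection of \emph{all outgoing} backhaul messages from transmitter~1. The point is that $(\tilde M_1^{[n]},W_{[2:K]})$ determines every interfering codeword $x_2^n,\dots,x_K^n$ (conditioned on what transmitter~1 sent out, the entire interactive protocol among the other transmitters is a deterministic function of their messages), so the interference is still cancelable, \emph{and} the side information $\tilde M_1^{[n]}$ is something whose entropy is directly bounded by the backhaul rates via $H(\tilde M_1^{[n]})\le n\sum_{j\neq1}R_{\rm b}^{[1,j]}$. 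The per-pair bound that comes out is asymmetric, $R_1+R_2\lesssim\log P+\sum_{j\neq1}R_{\rm b}^{[1,j]}+o(\log P)$, with coefficient $1$ on the outgoing rate of \emph{one} of the two transmitters. The factor $2$ in $\frac{1+\alpha}{2}$ then comes not from a $\tfrac12$ coefficient in the per-pair bound, but from summing over a cycle of $K$ pairs $R_1+R_2,\,R_2+R_3,\dots,R_K+R_1$, in which each $R_k$ appears twice on the left while each transmitter's total outgoing rate appears exactly once on the right, giving $2\sum_k R_k\lesssim K\log P+K\overline R_{\rm b}$. (Your all-pairs summation with symmetrized bounds would also land at the same DoF bound, but it is a cosmetic variation.) So the gap in your proposal is concrete: you need the per-pair bound, and the mechanism you describe for getting it (a helper-augmented two-user converse, with incoming rates and a $\tfrac12$ factor) is unestablished; the paper's mechanism (genie revealing $\tilde M_1^{[n]}$, which both kills interference and is chargeable to the backhaul) is what actually makes the step go through.
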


\begin{thm}[Achievability]  
\label{thm:ach2}
In the three-user interference channel with transmitter cooperation and average backhaul load $\alpha$,  we have that \vspace{-0.05in}
 \begin{equation}
 \msf{DoF}^*(\alpha) \geq \min\left\{1,\frac{1+\alpha}{2}\right\}.
 \end{equation}
\end{thm}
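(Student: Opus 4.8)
The plan is to derive Theorem~\ref{thm:ach2} from the receiver-cooperation achievability of Section~\ref{achiev} by a transmitter/receiver reciprocity argument. Exactly as there, it suffices to establish the two corner points: $\msf{DoF}^*(0)\ge 1/2$ holds by plain interference alignment \cite{cj08,mgmk09} with no cooperation among the transmitters; $\msf{DoF}^*(\alpha)\ge 1$ for $\alpha>1$ follows from the $\alpha=1$ point since surplus backhaul can always be ignored; and time-sharing fills in $\alpha\in(0,1)$. The matching converse $\msf{DoF}^*(\alpha)\le\min\{1,\frac{1+\alpha}{2}\}$ is already Theorem~\ref{thm:out2}. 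So the whole content is the achievability of $1$ DoF per user with average backhaul load $\alpha=1$.

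The key observation is that, once its interactive backhaul is unfolded over time, the cooperation-alignment scheme of Section~\ref{achiev} is a \emph{linear} scheme over the rational-dimension domain carried by the monomials $\nu_{\bf s}$: the transmitters send $x_j=\Gamma\sum_{{\bf s}\in{\cal S}_N}\nu_{\bf s}u_{j,{\bf s}}$ with the $u_{j,{\bf s}}$ genuine data; each receiver ML-detects the $(N+1)^9$ sums $\tilde r_{i,{\bf s}'}$; and receiver $i$ then recovers its $N^9$ desired symbols by a linear map applied to its own detected sums together with the $O(N^9)$ linear functionals of the \emph{other} receivers' detected sums that are delivered to it over the backhaul (the packets ${\cal M}_{\ell\to i}^{[r]}$). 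Transposing such a scheme is legitimate — the reciprocal channel simply has gains $h_{ji}$ in place of $h_{ij}$, the rank and alignment conditions of the scheme survive transposition, and the scheme of Section~\ref{achiev} holds for almost every choice of gains — and the transposition interchanges ``receiver $i$ decodes using side information equal to $O(N^9)$ functionals of the other receivers' observations'' with ``transmitter $i$ precodes using side information equal to $O(N^9)$ functionals of the other transmitters' data'', i.e.\ it produces a transmitter-cooperation scheme in which the receivers then decode with no cooperation at all. The per-user backhaul is unchanged: the forward scheme exchanges $3N^9$ symbols from $\mathbb{Z}_{O(Q)}$ over its three link directions, and the transposed one exchanges the same number of integer symbols of the same order over the reversed link directions (harmless here because the backhaul graph is complete), so $\overline R_{\rm b}\le N^9\log(O(Q))$ and $\alpha=\lim_{P\to\infty}\overline R_{\rm b}/\log(P)=N^9\frac{1-\varepsilon}{(N+1)^9+2\varepsilon}$, arbitrarily close to $1$ for $N$ large and $\varepsilon$ small. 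Each receiver still recovers $N^9$ symbols at DoF $\frac{1-\varepsilon}{(N+1)^9+2\varepsilon}$, hence $N^9\frac{1-\varepsilon}{(N+1)^9+2\varepsilon}\to 1$ per user.

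I would also spell the transposed construction out explicitly, in the style of Section~\ref{achiev}: the transmitters run the $N$-round exchange of that section ``backwards'' — along link directions $1\to2$, $2\to3$, $3\to1$ and for $r=0,\ldots,N-1$, each transmitter hands the next the $N^8$-symbol packet it needs in order to assemble the corresponding layer of its precoded signal — after which every receiver, by ML detection followed by linear processing of its own observation alone, extracts its $N^9$ data symbols. The modulation, the choice of $\Gamma,Q,\varepsilon$, the power-constraint bookkeeping, and the ML error bound $p_e\le 3(N+1)^9\exp(-c_2P^{\varepsilon/2})$ all carry over from \cite{mohammad-complex} with the roles of transmitters and receivers interchanged (the ``distributed inversion'' flavor of this argument is close in spirit to \cite{sa14}); detection errors in a given slot propagate within that slot but not across time, and are absorbed by an outer code on each sub-stream, exactly as in Section~\ref{achiev}.

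The step I expect to be the main obstacle is making the reciprocity rigorous for the \emph{interactive} backhaul. In the forward scheme $m_{i\to\hat i}(t)$ depends causally both on receiver $i$'s past observations and on backhaul it has already received, so the transposition is not the one-line identity ``transmit precoder $=$ transposed receive decoder'' but requires: (i) unrolling the whole protocol (over all $n$ slots, or one slot at a time) into a single linear map, transposing it, and re-packaging the transpose into a causal transmitter-side protocol whose per-link rates match the forward ones; (ii) checking that the transpose of an integer-valued, boundedly supported linear map is again integer-valued with comparable support, so the backhaul alphabet stays $\mathbb{Z}_{O(Q)}$ and $\alpha$ is preserved; and (iii) carrying the noisy-channel error analysis through in the reversed direction, where now the receivers rather than the transmitters perform the ML detection of the rational dimensions. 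None of these is conceptually new relative to Section~\ref{achiev}, but together they are where the real work lies.
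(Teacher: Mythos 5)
Your route is genuinely different from the paper's, and the difference matters. The paper does not transpose the receiver-cooperation protocol; it builds a transmitter-side scheme from scratch in two stages: (i) a new cooperation-alignment exchange among the transmitters (rounds ${\cal M}^{[r]}_{3\to2},{\cal M}^{[r]}_{2\to1},{\cal M}^{[r]}_{1\to3}$) whose purpose is to give transmitter $i$ the \emph{same} combinations $r_{i,\bf s}$ that a receiver in the uplink scheme would detect from the air, and (ii) a precoder $x_i=\Gamma'\sum_{\bf s}\hat\nu_{\bf s}\,r_{i,\bf s}$ in which the monomials $\hat\nu_{\bf s}$ are built from the entries of $\Hm^{-1}$, so that the end-to-end map $\Hm\cdot\Hm^{-1}$ diagonalizes and each receiver sees only its own sub-streams — this is the aligned network diagonalization of \cite{sa14}. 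This inverse-channel precoding is the load-bearing idea and it has no counterpart in your sketch, which keeps the forward monomials $\nu_{\bf s}$.

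The gap in your proposal is precisely the step you flag at the end, and it is not a finishing detail. The receiver-cooperation scheme is not a linear map to transpose: the backhaul packets are formed \emph{after} ML detection of integer sums, i.e.\ after a non-linear, noise-dependent quantization step, and the protocol is causal and interactive across $N$ rounds. ``Unroll, transpose, re-package causally, keep integer-valued and bounded, redo the error analysis in the reversed direction'' is a program, not a proof; no theorem of uplink--downlink duality or MIMO reciprocity in the literature covers interactive, finite-alphabet, post-detection side information, and you do not supply one. There is also a mismatch you gloss over: transposing produces a scheme for the channel with gains $h_{ji}$, while Theorem~\ref{thm:ach2} is stated for the same matrix $\Hm$ as Theorem~\ref{thm:ach}; invoking ``almost every channel'' papers over this, but it is one more thing to argue. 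By contrast, the paper's direct construction sidesteps all of this: the backhaul exchange among transmitters is noiseless and purely arithmetic (no detection involved), so the integer bookkeeping and the round-by-round causality are verified by inspection, and the only noisy step is a single ML detection at each receiver of a diagonalized, interference-free constellation. If you want to salvage the reciprocity angle, the honest conclusion is that you would still have to re-derive essentially the construction the paper gives, with $\Hm^{-1}$ appearing explicitly in the precoder.
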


Obviously, the immediate consequence is that for the case of $K = 3$ users, the optimal communication vs cooperation tradeoff curves
for cooperation at the transmitters or at the receivers are identical, and given by $\msf{DoF}^*(\alpha) = \min\left\{1,\frac{1+\alpha}{2}\right\}$. 
 Theorem \ref{thm:out2} is proved in Appendix \ref{proof:conv2}, while Theorem \ref{thm:ach2} is proved in Section \ref{sec:proof4}, after formally defining 
the channel model for transmitter cooperation. 

\subsection{Cooperation at the Transmitters: Channel Model}

The channel model considered in this section is illustrated in Fig.~\ref{channelTx}. 
As in the case of receiver cooperation, we assume that
each transmitter $i\in \{1,...,K\}$ has a message $W_{i}$ that is intended for receiver $i$ and we restrict each codeword $[x_{i}(t)]_{t=1}^{n}$  to satisfy the  average power constraint $\frac{1}{n}\sum_{t=1}^{n}|x_{i}(t)|^{2}\leq P.$

The received signal at the $i$th receiver at time $t=1,...,n$ is given by
\begin{equation}
y_{i}(t) = \sum_{j=1}^{K} h_{ij} x_{j}(t) + z_{i}(t),\vspace{-0.05in}\end{equation}
where $h_{ij}\in \CC$ is the (complex) channel gain between the $j$th transmitter and the $i$th receiver, and $z_{i}(t)$ is the additive circularly-symmetric complex Gaussian noise observed at receiver $i$ with zero mean and unit variance.

\begin{figure}[ht]

                \centering
                \includegraphics[width=1\columnwidth]{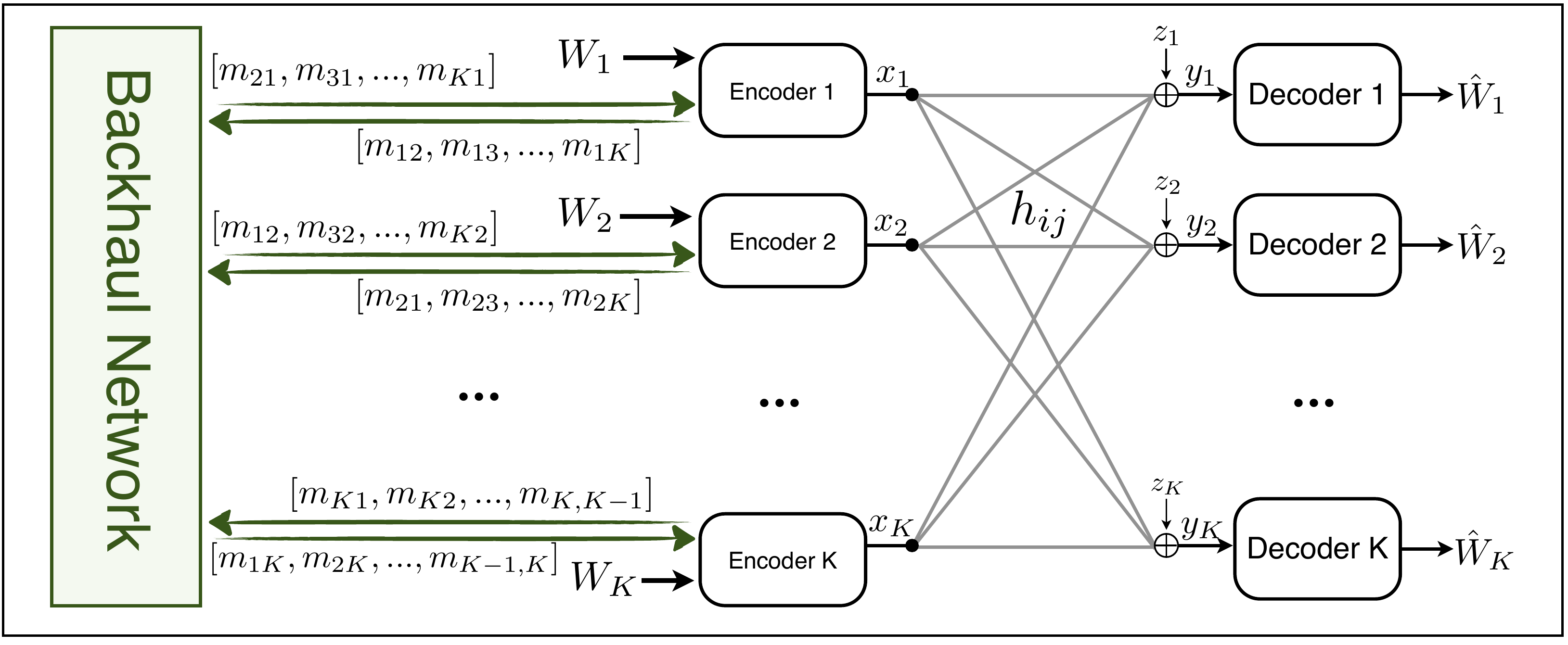}
                \caption{Channel model for transmitter cooperation}
                \label{channelTx}

\end{figure}

Further we assume that, in order to create $[x_{i}(t)]_{t=1}^{n}$, the encoders can collaborate over the backhaul by exchanging  messages  through directed noiseless links. We assume that there is a backhaul link $[i,\hat i]$, between every pair of encoders \mbox{$i\neq\hat i$~$\in \{1,...,K\}$}, and we denote the rate from encoder $i$ to  encoder $\hat i$ as $R_{\rm b}^{[i,\hat i]}$. The backhaul message from decoder $i$ to decoder $\hat i$, that passes through the link $[i,\hat i]$ at time $t$, is denoted by $m_{i\rightarrow \hat i}(t)$ and is given as a function of $W_{i}$ and all the previously received messages  $[m_{\ell\rightarrow  i}(\tau)]_{\tau=1}^{t-1}$ from  all encoders $\ell\in \{1,...,K\}, \ell\neq i$. 
 Hence, $R_{\rm b}^{[i,\hat i]}$ is again given by (\ref{ziopino}), and the total average backhaul rate
$\overline R_{\rm b}$ is defined as in (\ref{ziapina}).

The rate tuple  $(R_{1},R_{2},...,R_{K})$ is called \emph{achievable} under an average backhaul cooperation rate constraint $\overline R_{\rm b}\leq L$, 
if for any $\epsilon>0$ and sufficiently large $n$, there exist a length-$n$ coding scheme defined by:
\begin{itemize}
\item The message sets ${\cal W}_{i} = \{1,2,...,2^{nR_{i}}\}$, $i=1,...,K$.
\item The backhaul relaying functions $g_t^{[i,\hat i]}$ that generate $m_{i\rightarrow\hat i}(t)$ such that  
$$m_{i\rightarrow \hat i}(t) = g_t^{[i,\hat i]}\left(W_{i},M_{i}^{t-1}\right)\in {\cal B}^{[i,\hat i]},$$
where  $M_{i}^{t}\triangleq \left\{[m_{\ell\rightarrow  i}(\tau)]_{\tau=1}^{t} : \ell = 1,\ldots,K,\; \ell\neq i  \right\}$ 
is the collection of all the backhaul messages $m_{\ell\rightarrow i}(\tau)$ received at decoder $i$ up to time $t$, 
and ${\cal B}^{[i,\hat i]}$ is a finite set that denotes the  message alphabet used for the backhaul link $[i,\hat i]$.
\item The encoding functions 
$$ f_i : {\cal W}_{i} \times \prod_{\substack{\ell  = 1,...,K}{\ell \neq i}} \left ( {\cal B}^{[\ell,i]} \right )^n \rightarrow \CC^{n},\;
\mbox{that give }\; [x_{i}(t)]_{t=1}^{n} \triangleq f_i \left(W_{i},M_{i}^{n}\right),$$
 
\item The decoding functions $ \eta_i : \mathbb{C}^{n}  \rightarrow {\cal W}_{i}$,
that give $\hat W_{i}\triangleq \eta_i\left([y_{i}(\tau)]_{\tau=1}^{n}\right)$,
\end{itemize}
such that the corresponding probability of error given by $P_{e}^{(n)} \triangleq\mathbb{P}\left(\bigcup_{i = 1}^K  \{\hat W_{i}\neq W_{i}\} \right)$ is less~than~$\epsilon$, and the average backhaul cooperation rate satisfies the backhaul load constraint 
$$\overline R_{\rm b}=\frac{1}{K}\sum_{i=1}^{K}\sum_{\hat i\neq i}\frac{1}{n}H \bigl([m_{i\rightarrow \hat i}(\tau)]_{\tau=1}^{n}\bigr)\leq L.$$

Finally, the  definitions of the capacity region ${\mathcal C}_L$, the average  (per user) backhaul cooperation load $\alpha$ and the average (per user) achievable degrees of freedom $\msf{DoF}(\alpha)$  follow  directly from the corresponding definitions in Section~\ref{sec:modelRX}.

\subsection{Achievability (Proof of Theorem~4)}   \label{sec:proof4}

In order to prove Theorem~4 it suffices to show the achievability of the corner point ${\sf DoF}(1)=1$.
Here we will consider the same modulation scheme as in Section~\ref{proof4uplink} for the receiver cooperation;  
first, the transmitters are going to split their own messages  $W_{1}$, $W_{2}$ and $W_{3}$ into $N^{9}$ sub-messages, for some $N \in \mathbb{N}$ and then modulate each one of them over the integer constellation  $\mathbb{Z}_{Q}\triangleq\mathbb{Z}\cap [-Q,Q]$ to form the corresponding sub-streams. Following the notation introduced in Section~\ref{proof4uplink}, the sub-streams intended for user one, two and three, will be indexed by ${\bf s}\in {\cal S}_{N}$ and will be denoted by $\{a_{\bf s}(t)\}_{t=1}^n$, $\{b_{\bf s}(t)\}_{t=1}^n$ and $\{c_{\bf s}(t)\}_{t=1}^n$, respectively. For the rest of the proof we will drop the time index $t$ (unless it is required for clarification) and focus on a single time slot.

Notice that without any cooperation, after  using the above scheme, each transmitter will have access only to its own sub-streams $a_{\bf s}$, $b_{\bf s}$ and $c_{\bf s}$, respectively, for all ${\bf s}\in {\cal S}_{N}$.   
In the first part of the proof we aim to show that it is possible for the transmitters to cooperate over the backhaul (with average  backhaul rate   
$\overline R_{\rm b} \leq 1$) and create the combinations,  
\begin{align}
r_{1,\bf s}\;\triangleq  \;\;&a_{(s_{11}{-1}),s_{12},s_{13},s_{21},s_{22},s_{23},s_{31},s_{32},s_{33}} \nonumber\\
+&b_{s_{11},(s_{12}{-1}),s_{13},s_{21},s_{22},s_{23},s_{31},s_{32},s_{33}}\nonumber\\
+&c_{s_{11},s_{12},(s_{13}{-1}),s_{21},s_{22},s_{23},s_{31},s_{32},s_{33}},\nonumber
\end{align}
\begin{align}
r_{2,\bf s}\;\triangleq  \;\;&a_{s_{11},s_{12},s_{13},(s_{21}{-1}),s_{22},s_{23},s_{31},s_{32},s_{33}}\nonumber\\+
&b_{s_{11},s_{12},s_{13},s_{21},(s_{22}{-1}),s_{23},s_{31},s_{32},s_{33}}\nonumber\\+
&c_{s_{11},s_{12},s_{13},s_{21},s_{22},(s_{23}{-1}),s_{31},s_{32},s_{33}},\nonumber
\end{align}
and
\begin{align}
r_{3,\bf s}\;\triangleq  \;\;&a_{s_{11},s_{12},s_{13},s_{21},s_{22},s_{23},(s_{31}{-1}),s_{32},s_{33}}\nonumber\\+
&b_{s_{11},s_{12},s_{13},s_{21},s_{22},s_{23},s_{31},(s_{32}{-1}),s_{33}}\nonumber\\+
&c_{s_{11},s_{12},s_{13},s_{21},s_{22},s_{23},s_{31},s_{32},(s_{33}{-1})}, \nonumber
\end{align}
for all ${\bf s}\in {\cal S}_{N+1}$.
Recall that these combinations are the same with the ones defined in (\ref{obs1}), (\ref{obs2}) and (\ref{obs3}) in Section~\ref{proof4uplink}. 

First, notice that  for all ${\bf s}\in {\cal S}_{N+1}$ with $s_{21}=1$, $r_{2,\bf s}$  is given by 
$
b_{s_{11},s_{12},s_{13},1,s_{22}^{-1},s_{23},s_{31},s_{32},s_{33}} +
c_{s_{11},s_{12},s_{13},1,s_{22},s_{23}^{-1},s_{31},s_{32},s_{33}},$ and hence transmitter 2 only needs to obtain over the backhaul the corresponding $c_{\bf s}$ sub-streams  from transmitter 3.
Therefore in the first step, transmitter 3 will give $c_{s_{11},s_{12},s_{13},1,s_{22},s_{23}^{-1},s_{31},s_{32},s_{33}}$ to transmitter 2 over the backhaul, so that transmitter 2 can combine it with the its own  $b_{s_{11},s_{12},s_{13},1,s_{22}^{-1},s_{23},s_{31},s_{32},s_{33}}$ sub-streams  in order to create $r_{2,\bf s}$ for all ${\bf s}\in {\cal S}_{N+1}$ with $s_{21}=1$. Observe that in this first backhaul step transmitter 3 has sent the backhaul messages ${\cal M}_{3\rightarrow 2}^{[1]} = \{ c_{s_{11},s_{12},s_{13},1,s_{22},s_{23}^{-1},s_{31},s_{32},s_{33}} | \mbox{\footnotesize $s_{ij} \in \{1,...,N+1\}$} \}$ 
which can be equivalently written~as  \begin{equation}
{\cal M}_{3\rightarrow 2}^{[1]} = \{ c_{s_{11},s_{12},s_{13},1,s_{22},s_{23},s_{31},s_{32},s_{33}} | \mbox{\footnotesize $s_{ij} \in \{1,...,N\}$} \},
\end{equation}
since, by convention, $a_{\bf s} =b_{\bf s} =c_{\bf s} = 0$, for any ${\bf s}\notin{\cal S}_{N}$.

Recombining the newly acquired $c_{s_{11},s_{12},s_{13},1,s_{22},s_{23},s_{31},s_{32},s_{33}}$ sub-streams from transmitter 3, with the a different choice of its own sub-streams, transmitter 2 can create
the backhaul messages
\begin{align}
\hspace{-0.2in}{\cal M}_{2\rightarrow 1}^{[1]}\hspace{-0.05in} =\big\{
 &b_{s_{11},s_{12}^{-1},s_{13},1,s_{22},s_{23},s_{31},s_{32},s_{33}}\nonumber\\
+&c_{s_{11},s_{12},s_{13}^{-1},1,s_{22},s_{23},s_{31},s_{32},s_{33}}
|\mbox{\footnotesize $s_{ij} \in \{1,...,N+1\}$}
\big\},
\end{align}
and send it to transmitter 1 over the backhaul. Transmitter 1 can now add its own sub-streams $a_{s_{11}^{-1},s_{12},s_{13},1,s_{22},s_{23},s_{31},s_{32},s_{33}}$ to the newly acquired sub-streams 
\begin{equation}b_{s_{11},s_{12}^{-1},s_{13},1,s_{22},s_{23},s_{31},s_{32},s_{33}} + c_{s_{11},s_{12},s_{13}^{-1},1,s_{22},s_{23},s_{31},s_{32},s_{33}},\label{2to1d}\end{equation}
and create $r_{1,\bf s}$ for all ${\bf s}\in {\cal S}_{N+1}$ with $s_{21}=1$.

Now, based on this information, transmitter 1 can also create the backhaul messages
\begin{align}
\hspace{-0.2in}{\cal M}_{1\rightarrow 3}^{[1]}\hspace{-0.05in} =\big\{
 &a_{s_{11},s_{12}^{-1},s_{13},1,s_{22},s_{23},s_{31}^{-1},s_{32}^{+1},s_{33}}\nonumber\\
+&b_{s_{11},s_{12}^{-1},s_{13},1,s_{22},s_{23},s_{31},s_{32},s_{33}}\nonumber\\
+&c_{s_{11},s_{12},s_{13}^{-1},1,s_{22},s_{23},s_{31},s_{32},s_{33}}
|\mbox{\footnotesize $s_{ij} \in \{1,...,N+1\}$}
\big\},\label{1to3b}
\end{align}
by adding $a_{s_{11},s_{12}^{-1},s_{13},1,s_{22},s_{23},s_{31}^{-1},s_{32}^{+1},s_{33}}$ to the sub-streams that have been obtained from transmitter~2 in (\ref{2to1d}). Giving ${\cal M}_{1\rightarrow 3}^{[1]}$ to transmitter 3 over the backhaul is able to help transmitter 3 create 
\begin{align}
\hspace{-0.2in}
&a_{s_{11},s_{12}^{-1},s_{13},1,s_{22},s_{23},s_{31}^{-1},s_{32}^{+1},s_{33}}\nonumber\\
+&b_{s_{11},s_{12}^{-1},s_{13},1,s_{22},s_{23},s_{31},s_{32},s_{33}}\nonumber\\
+&c_{s_{11},s_{12}^{-1},s_{13},1,s_{22},s_{23},s_{31},s_{32}^{+1},s_{33}^{-1}},\label{trianta}
\end{align}
for all $s_{ij} \in \{1,...,N+1\}$ by adding $c_{s_{11},s_{12}^{-1},s_{13},1,s_{22},s_{23},s_{31},s_{32}^{+1},s_{33}^{-1}} - c_{s_{11},s_{12},s_{13}^{-1},1,s_{22},s_{23},s_{31},s_{32},s_{33}}$ to the sub-streams in (\ref{1to3b}). Notice that (\ref{trianta}) can be equivalently written as  
\begin{align}
\hspace{-0.2in}
&a_{s_{11},s_{12},s_{13},1,s_{22},s_{23},s_{31}^{-1},s_{32},s_{33}}\nonumber\\
+&b_{s_{11},s_{12},s_{13},1,s_{22},s_{23},s_{31},s_{32}^{-1},s_{33}}\nonumber\\
+&c_{s_{11},s_{12},s_{13},1,s_{22},s_{23},s_{31},s_{32},s_{33}^{-1}},\label{triantaena}
\end{align}
and hence we can see that transmitter 3 can also create the required combinations $r_{3,\bf s}$ for all ${\bf s}\in {\cal S}_{N+1}$ with $s_{21}=1$.

For the next step, transmitter 3 can substitute $c_{s_{11},s_{12},s_{13},1,s_{22},s_{23},s_{31},s_{32},s_{33}^{-1}}$ with $c_{s_{11},s_{12},s_{13},2,s_{22},s_{23}^{-1},s_{31}^{-1},s_{32},s_{33}}$ in (\ref{triantaena}) in order to create the backhaul messages
\begin{align}
\hspace{-0.2in}{\cal M}_{3\rightarrow 2}^{[2]}\hspace{-0.05in} =\big\{
&a_{s_{11},s_{12},s_{13},1,s_{22},s_{23},s_{31}^{-1},s_{32},s_{33}}\nonumber\\
+&b_{s_{11},s_{12},s_{13},1,s_{22},s_{23},s_{31},s_{32}^{-1},s_{33}}\nonumber\\
+&c_{s_{11},s_{12},s_{13},2,s_{22},s_{23}^{-1},s_{31}^{-1},s_{32},s_{33}}
|\mbox{\footnotesize $s_{ij} \in \{1,...,N+1\}$}
\big\},
\end{align}
and give it to transmitter 2 over the backhaul. After obtaining ${\cal M}_{3\rightarrow 2}^{[2]}$, transmitter 2 can create 
\begin{align}
\hspace{-0.2in}
&a_{s_{11},s_{12},s_{13},1,s_{22},s_{23},s_{31}^{-1},s_{32},s_{33}}\nonumber\\
+&b_{s_{11},s_{12},s_{13},2,s_{22}^{-1},s_{23},s_{31}^{-1},s_{32},s_{33}}\nonumber\\
+&c_{s_{11},s_{12},s_{13},2,s_{22},s_{23}^{-1},s_{31}^{-1},s_{32},s_{33}} \label{triantatria}
\end{align}
for all $s_{ij} \in \{1,...,N+1\}$, simply by adding $b_{s_{11},s_{12},s_{13},2,s_{22}^{-1},s_{23},s_{31}^{-1},s_{32},s_{33}} - b_{s_{11},s_{12},s_{13},1,s_{22},s_{23},s_{31},s_{32}^{-1},s_{33}}$. Notice that (\ref{triantatria}) can  equivalently be written as
\begin{align}
\hspace{-0.2in}
&a_{s_{11},s_{12},s_{13},1,s_{22},s_{23},s_{31},s_{32},s_{33}}\nonumber\\
+&b_{s_{11},s_{12},s_{13},2,s_{22}^{-1},s_{23},s_{31},s_{32},s_{33}}\nonumber\\
+&c_{s_{11},s_{12},s_{13},2,s_{22},s_{23}^{-1},s_{31},s_{32},s_{33}} 
\end{align} 
which is equal to $r_{2,\bf s}$ for all ${\bf s}\in {\cal S}_{N+1}$ with $s_{21}=2$. Based on this information, now transmitter~2 can create the backhaul messages  
\begin{align}
\hspace{-0.2in}{\cal M}_{2\rightarrow 1}^{[2]}\hspace{-0.05in} =\big\{
&a_{s_{11},s_{12},s_{13},1,s_{22},s_{23}^{+1},s_{31},s_{32},s_{33}}\nonumber\\
+&b_{s_{11},s_{12}^{-1},s_{13},2,s_{22},s_{23},s_{31},s_{32},s_{33}}\nonumber\\
+&c_{s_{11},s_{12},s_{13}^{-1},2,s_{22},s_{23},s_{31},s_{32},s_{33}}
|\mbox{\footnotesize $s_{ij} \in \{1,...,N+1\}$}
\big\},
\end{align}
that will allow transmitter~1 to create the combinations 
\begin{align}
\hspace{-0.2in}
&a_{s_{11}^{-1},s_{12},s_{13},2,s_{22},s_{23},s_{31},s_{32},s_{33}}\nonumber\\
+&b_{s_{11},s_{12}^{-1},s_{13},2,s_{22},s_{23},s_{31},s_{32},s_{33}}\nonumber\\
+&c_{s_{11},s_{12},s_{13}^{-1},2,s_{22},s_{23},s_{31},s_{32},s_{33}},
\end{align}
for all $s_{ij} \in \{1,...,N+1\}$ and hence also acquire  
$r_{1,\bf s}$ for all ${\bf s}\in {\cal S}_{N+1}$ with $s_{21}=2$.

Following the same pattern, before each round $r$, the transmitters 1, 2 and 3 will have available the combinations $r_{1,\bf s}$, $r_{2,\bf s}$ and  $r_{3,\bf s}$ for all ${\bf s}\in {\cal S}_{N+1}$ with $s_{21}=r-1$, respectively,  and will create the backhaul messages
\begin{align}
\hspace{-0.2in}{\cal M}_{3\rightarrow 2}^{[r]}\hspace{-0.05in} =\big\{
&a_{s_{11},s_{12},s_{13},(r-1),s_{22},s_{23},s_{31}^{-1},s_{32},s_{33}}\nonumber\\
+&b_{s_{11},s_{12},s_{13},(r-1),s_{22},s_{23},s_{31},s_{32}^{-1},s_{33}}\nonumber\\
+&c_{s_{11},s_{12},s_{13},r,s_{22},s_{23}^{-1},s_{31}^{-1},s_{32},s_{33}}
|\mbox{\footnotesize $s_{ij} \in \{1,...,N+1\}$}
\big\},
\end{align}

\begin{align}
\hspace{-0.2in}{\cal M}_{2\rightarrow 1}^{[r]}\hspace{-0.05in} =\big\{
&a_{s_{11},s_{12},s_{13},(r-1),s_{22},s_{23}^{+1},s_{31},s_{32},s_{33}}\nonumber\\
+&b_{s_{11},s_{12}^{-1},s_{13},r,s_{22},s_{23},s_{31},s_{32},s_{33}}\nonumber\\
+&c_{s_{11},s_{12},s_{13}^{-1},r,s_{22},s_{23},s_{31},s_{32},s_{33}}
|\mbox{\footnotesize $s_{ij} \in \{1,...,N+1\}$}
\big\},
\end{align}
and 
\begin{align}
\hspace{-0.2in}{\cal M}_{1\rightarrow 3}^{[r]}\hspace{-0.05in} =\big\{
 &a_{s_{11},s_{12}^{-1},s_{13},r,s_{22},s_{23},s_{31}^{-1},s_{32}^{+1},s_{33}}\nonumber\\
+&b_{s_{11},s_{12}^{-1},s_{13},r,s_{22},s_{23},s_{31},s_{32},s_{33}}\nonumber\\
+&c_{s_{11},s_{12},s_{13}^{-1},r,s_{22},s_{23},s_{31},s_{32},s_{33}}
|\mbox{\footnotesize $s_{ij} \in \{1,...,N+1\}$}
\big\},
\end{align}
in order to obtain $r_{1,\bf s}$, $r_{2,\bf s}$ and  $r_{3,\bf s}$ for all ${\bf s}\in {\cal S}_{N+1}$ with $s_{21}=r$. Therefore, after $N+1$ rounds of backhaul cooperation all the transmitters will have obtained $r_{1,\bf s}$, $r_{2,\bf s}$ and  $r_{3,\bf s}$ for all ${\bf s}\in {\cal S}_{N+1}$ as required.
In order to measure the amount of backhaul cooperation that is needed for the above message passing scheme we can argue that the total number of messages that have been exchanged is bounded by
\begin{align}
\#\textrm{msg} &=  \sum_{r=1}^{N+1} \left(\big|{\cal M}_{1\rightarrow 2}^{[r]}\big|+\big|{\cal M}_{2\rightarrow 3}^{[r]}\big| + \big|{\cal M}_{3\rightarrow 1}^{[r]}\big|\right)\leq 3(N+1)^{9},
\end{align}
and since each symbol is in $\mathbb{Z}_{3Q}$, the average (per user) backhaul rate that has been used can be calculated as  
$\overline R_{\rm b} \leq \frac{ \#\textrm{msg} \cdot \log(6\lfloor Q \rfloor+1)}{3}\leq (N+1)^{9}\log(6\lfloor Q \rfloor+1)$. 
Hence, choosing the same parameter $Q$ as in  Section~\ref{proof4uplink}, we have that  
$\lim_{P \rightarrow \infty} \frac{\overline R_{\rm b}}{\log(P)} \leq  \frac{(1-\varepsilon)(N+1)^9 }{(N+1)^9+2\varepsilon}$, which is arbitrary close to one, given large enough $N$ and small enough~$\varepsilon$. 

After the above cooperation alignment step over the backhaul, for each time slot $t$, each transmitter will form its transmitted signals from $r_{1,\bf s}(t)$, $r_{2,\bf s}(t)$ and  $r_{3,\bf s}(t)$, respectively, following a scheme similar to the aligned network diagonalization scheme that  has been 
introduced in \cite{sa14}. The main idea  is that the transmitters  will effectively diagonalize the channel matrix by choosing to transmit  specific  linear combinations of their own $r_{i,\bf s}(t)$, for all ${\bf s}\in {\cal S}_{N+1}$, with monomial coefficients 
generated from the inverse of the channel matrix.

Let us first define
$$\Hm \triangleq \begin{bmatrix}
 h_{11} &  h_{12}&  h_{13}\\
 h_{21} &  h_{22}&  h_{23}\\
 h_{31} &  h_{32}&  h_{33} 
\end{bmatrix}\;\;\;\;
\mbox{and}\;\; \;\;
\begin{bmatrix}
\hat h_{11} & \hat h_{12}& \hat h_{13}\\
\hat h_{21} & \hat h_{22}& \hat h_{23}\\
\hat h_{31} & \hat h_{32}& \hat h_{33}
\end{bmatrix}\triangleq \Hm^{-1}$$
and let $\hat \nu_{\bf s}= \prod_{i,j} \hat h_{ij}^{s_{ij}}$.
The transmitted signal from transmitter $i=1,2,3$ at time $t$ is given by 
\begin{align}
x_{i}(t) &= \Gamma'\cdot \sum_{{\bf s}\in {\cal S}_{N+1}} \hat\nu_{\bf s} \cdot
r_{i,\bf s}(t)  ,  
\label{rewriter}
\end{align}
where  the scaling factor $\Gamma'$ guarantees that the power constraint is satisfied. 
The corresponding received signals at time $t$ are given by 
\begin{equation}y_{i}(t) = h_{i1}x_{1}(t) + h_{i2}x_{2}(t)+h_{i3}x_{3}(t) + z_{i}(t),\; \;i=1,2,3.\end{equation}

If we rewrite  (\ref{rewriter})   as a sum over all ${\bf s}\in {\cal S}_{N}$ instead of ${\cal S}_{N+1}$ (by factoring out the corresponding channel coefficients) as 
\begin{align}
x_{i}(t) = \Gamma'\cdot \sum_{{\bf s}\in {\cal S}_{N}} \hat\nu_{\bf s} \cdot
\left(\hat h_{i1}a_{\bf s}(t)+\hat h_{i2}b_{\bf s}(t)+\hat h_{i3}c_{\bf s}(t)\right),
\label{sarantatrio}
\end{align}
and further define
\begin{align}
\xv(t) &\triangleq [x_{1}(t) , x_{2}(t), x_{3}(t)]^{\rm T},\\
\yv(t) &\triangleq [y_{1}(t) , y_{2}(t), y_{3}(t)]^{\rm T},\\
\zv(t) &\triangleq [z_{1}(t) , z_{2}(t), z_{3}(t)]^{\rm T},\\
\uv(t) &\triangleq [a_{\bf s}(t) , b_{\bf s}(t), c_{\bf s}(t)]^{\rm T},
\end{align}
we can rewrite the transmitted signals at time $t$ in matrix form as 
\begin{align}
\xv(t) &= \Gamma'\cdot \sum_{{\bf s}\in {\cal S}_{N}} \hat\nu_{\bf s} \cdot \Hm^{-1} \cdot \uv(t),
\end{align}
and the corresponding received signal observations as
\begin{align}
\yv(t) &= \Hm \cdot \xv(t) + \zv(t)\\
&= \Hm \cdot \left(\Gamma'\cdot \sum_{{\bf s}\in {\cal S}_{N}} \hat\nu_{\bf s} \cdot \Hm^{-1} \cdot \uv(t)\right) + \zv(t)\\
&= \Gamma'\cdot \sum_{{\bf s}\in {\cal S}_{N}} \hat\nu_{\bf s} \cdot \uv(t) + \zv(t).\label{rxsig}
\end{align}

Therefore, as we  can directly see from (\ref{rxsig}), the received signal observations at users one, two and three, are given by 
\begin{equation}
y_{1}(t) = \Gamma'\cdot \sum_{{\bf s}\in {\cal S}_{N}} \hat\nu_{\bf s} \cdot a_{\bf s}(t) + z_{1}(t),
\end{equation}  
\begin{equation}
y_{2}(t) = \Gamma'\cdot \sum_{{\bf s}\in {\cal S}_{N}} \hat\nu_{\bf s} \cdot b_{\bf s}(t) + z_{2}(t),
\end{equation} 
and
\begin{equation}
y_{3}(t) = \Gamma'\cdot \sum_{{\bf s}\in {\cal S}_{N}} \hat\nu_{\bf s} \cdot c_{\bf s}(t) + z_{3}(t),
\end{equation} 
respectively, and hence, the above transmission scheme is able to  effectively inverting the channel matrix and eliminate all interference; each receiver will only observe the sub-streams that correspond to its own desired message. 

To finalize the proof, following the same ML detection scheme that we have used for the receiver cooperation 
in Section~\ref{proof4uplink} and the corresponding results in \cite{sa14}, we can show that for each time~$t$,  all the sub-streams $a_{\bf s}(t)$, $b_{\bf s}(t)$ and $c_{\bf s}(t)$ can be successfully detected at their intended receivers, with the corresponding message DoF being arbitrarily close to one for large enough $N$.\footnote{Also in this case, to go from 
vanishing symbol-error probability to block-error probability the approach of per-stream outer coding and the argument based on Fano inequality as in 
\cite[Eq. (14)]{mgmk09} can be used.}

\section{Conclusions}\label{conclusions}

In  multiuser interference channel models, cooperation -- either at the receivers' or at the transmitters' side -- 
is often considered to be available in the network through some form of centralized processing.  This modeling approach is rather appealing, 
as it allows the entire cooperative network to be seen as a single MIMO multiple access or broadcast channel. 
Hence, a common theme in cooperative interference management techniques has been, up to now,  the use the backhaul links 
primarily as a means to offload baseband processing to a single central node. However,  this barely takes into account the inherent distributed nature of such systems; 
cooperative networks often consist of several distributed processors (e.g., receivers connected to the cloud) and {\em inter-processor communication} is a precious and 
limited resource that should also be quantified. 

Motivated by this consideration, our first goal in this paper has been to challenge the above current centralized approach and consider 
the extension to more than two users of the classical information-theoretic model of interference channels with receiver or transmitter cooperation pioneered in 
\cite{wt-Rx11,wt-tx11}. We considered a multiuser interference network under a general cooperation model that does not impose any specific, {\em a~priori} structure in the backhaul  
architecture. When receiver cooperation is available for example, every receiver can first process its observations locally and then potentially share information with any other receiver 
in the network in order to help in the decoding process. Of course, a centralized approach can be implemented within this framework as a special case, 
by restricting all receivers to just quantize and forward their observations to a single node within the network. Overall, the ``interactive'' approach adopted   
in this paper allows to consider a more general class of interference management strategies. 

For this model, we were able to quantify the fundamental tradeoff between the achievable communication rates and the corresponding backhaul 
cooperation rate in wireless networks in terms of degrees of freedom.  In particular, 
we showed that if the average (per user) rate scales as $R = \msf{DoF}\cdot\log(P) + o(\log(P))$ 
and the average (per user) backhaul cooperation load scales as $L = \alpha\cdot\log(P)+ o(\log(P))$,  the optimal  communication vs cooperation tradeoff for the $K$-user interference channel 
is upper bounded by $\msf{DoF}^{*}(\alpha)\leq\min\{1,\frac{1+\alpha}{2}\}$, regardless whether cooperation is available at the transmitters' or at receivers' side. 
The corner points of this tradeoff region, namely  $\msf{DoF}(0) =\frac{1}{2}$ and  $\msf{DoF}(1) ={1}$, are easily achievable in the case of $K=2$ by orthogonal user 
scheduling and centralized processing, respectively, and the optimal tradeoff in this case is given by $\msf{DoF}^{*}(\alpha)=\min\{1,\frac{1+\alpha}{2}\}$. 
However, following the same approach for $K>2$ is not optimal in general. 
This is expected -- at least for the case where there is no cooperation -- since  interference alignment can still achieve the corner point $\msf{DoF}(0) =\frac{1}{2}$, no matter how many interfering users 
are in the network. On the other hand, as the number of users increases, centralized processing can only achieve $\msf{DoF}(2(K-1)/K) ={1}$, and hence it seems that more backhaul capacity 
is required to maintain the full DoF.  

Surprisingly, this paper shows that this is not true. We developed the new idea of {\em cooperation alignment} and showed that it can have an analogous effect on the backhaul cooperation 
load as interference alignment has on the ``wireless'' degrees of freedom. That is, under cooperation alignment, from each receiver's (resp. transmitter's) perspective it appears as if a single 
user jointly processes the observations (resp. messages) of the entire network and only shares the necessary, minimal information over the backhaul. 
Focusing on the $K = 3$ users interference channel case, we proposed a new interference management scheme based on cooperation alignment and proved that it is able to 
achieve the corner point $\msf{DoF}(1) ={1}$, in  both the receiver cooperation and transmitter cooperation cases. This implies that, for $K=3$, cooperation alignment over the backhaul, together with interference alignment over the wireless channel, can  achieve the entire cooperation vs cooperation tradeoff, $\msf{DoF}^{*}(\alpha)=\min\{1,\frac{1+\alpha}{2}\}$,  which surprisingly remains unchanged 
as we move from two to three users. 

An interesting open question that arises from this work is whether this behavior continues to hold for cooperative interference networks with more than three transmit-receive pairs. 
For example, when $K=4$, centralized processing can achieve $\msf{DoF}(3/2) =1$, but it is not known whether the same full DoF can be achieved with an average per user backhaul 
load $\alpha< {3}/{2}$. More importantly, proving the achievability of  the corner point $\msf{DoF}(1) =1$ for networks with $K\geq4$, combined with the upper bounds presented 
in this work, would immediately yield a characterization of the optimal cooperation vs cooperation tradeoff in these cases. Even though such a generalization of our schemes can be very 
challenging -- mainly due to the constructive nature of our achievability proofs -- we believe that techniques based on cooperation alignment will eventually be able to break the ``centralized processing'' 
barrier in cooperative interference networks and provide a better understanding of the design of more efficient interference management schemes.

\clearpage
\appendices
\section{Proof of Theorem~\ref{thm:out}: Receiver Cooperation Upper Bound }\label{proof:conv}

First we are going to bound the rates $R_{1}+R_{2}$ by following an approach similar to the two-user bound developed in \cite{wt-Rx11}.
Consider a genie that gives   
$$y_{[2:K]}^{n}\triangleq [y_{2}(\tau),y_{3}(\tau),...,y_{K}(\tau)]_{\tau=1}^{n}\; \mbox{and}\;x_{2}^{n}\triangleq [x_{2}(\tau)]_{\tau=1}^{n}$$ to receiver one,  and the messages $$W_{[3:K]}\triangleq [W_{3},W_{4},...,W_{K}]$$ to both receivers one and two, as side information.  
Starting from Fano's inequality we have that

\resizebox{1\linewidth}{!}{
 \begin{minipage}{\linewidth}
\begin{align}
&n(R_{1}+R_{2}-\epsilon_{n}) \nonumber
\\&\stackrel{\phantom{\rm (0)}}{=} I\big(W_{1};y_{1}^{n},M_{1}^{[n]}\big) + 
I\big(W_{2};y_{2}^{n},M_{2}^{[n]}\big)\nonumber
\\&\stackrel{{\rm (a)}}{\leq} I\big(W_{1};y_{1}^{n},M_{1}^{[n]}\big|W_{[3:K]}\big) + 
I\big(W_{2};y_{2}^{n},M_{2}^{[n]}\big|W_{[3:K]}\big)\nonumber
\\&\stackrel{{\rm (b)}}{\leq} I\big(x_{1}^{n};y_{1}^{n},M_{1}^{[n]}\big|W_{[3:K]}\big) + 
I\big(x_{2}^{n};y_{2}^{n},M_{2}^{[n]}\big|W_{[3:K]}\big)\nonumber
\\&\stackrel{{\rm (c)}}{\leq} I\big(x_{1}^{n};y_{1}^{n},M_{1}^{[n]},x_{2}^{n},y_{[2:K]}^{n}\big|W_{[3:K]}\big) + 
I\big(x_{2}^{n};y_{2}^{n},M_{2}^{[n]}\big|W_{[3:K]}\big)\nonumber
\\&\stackrel{{\rm (d)}}{=} I\big(x_{1}^{n};y_{[1:K]}^{n}\big|x_{2}^{n},W_{[3:K]}\big) + 
I\big(x_{2}^{n};y_{2}^{n},M_{2}^{[n]}\big|W_{[3:K]}\big),
\label{firstp}\\\nonumber
\vspace{0.1in}
\end{align}
\end{minipage}}
where ($\rm a$) follows from the fact that $W_{i}$ are independent, ($\rm b$) from the data processing inequality, ($\rm c$) from the chain rule by adding $I\big(x_{1}^{n};x_{2}^{n},y_{[2:K]}^{n}\big|y_{1}^{n},M_{1}^{[n]},W_{[3:K]}\big)\geq 0$, and ($\rm d$) from the fact that $x_{1}^{n}$ and $x_{2}^{n}$ are independent and that $M_{1}^{[n]}$ is a function of $y_{[1:K]}^{n}$.

The second term in (\ref{firstp}) can be further bounded as 
\begin{align}
&I\big(x_{2}^{n};y_{2}^{n},M_{2}^{[n]}\big|W_{[3:K]}\big) \nonumber
\\ &\stackrel{\phantom{\rm (0)}}{=} I\big(x_{2}^{n};y_{2}^{n}\big|W_{[3:K]}\big) + I\big(x_{2}^{n};M_{2}^{[n]}\big|y_{2}^{n},W_{[3:K]}\big)\nonumber
\\ &\stackrel{\phantom{\rm (0)}}{\leq} I\big(x_{2}^{n};y_{2}^{n}\big|W_{[3:K]}\big) + H(M_{2}^{[n]})\nonumber
\\ &\stackrel{{\rm (e)}}{\leq} I\big(x_{2}^{n};y_{2}^{n}\big|W_{[3:K]}\big) 
+ \textstyle\sum_{i\neq2}H([m_{i\rightarrow2}(\tau)]_{\tau=1}^{n})\nonumber
\\ &\stackrel{\phantom{\rm (0)}}{=} I\big(x_{2}^{n};y_{2}^{n}\big|W_{[3:K]}\big) 
+ n\textstyle\sum_{i\neq2}R_{\rm b}^{[i,2]},\label{eq26}
\end{align}
where ($\rm e$) follows from the definition of $M_{2}^{[n]}\triangleq \left\{[m_{i\rightarrow  2}(\tau)]_{\tau=1}^{n}, i\in\{1,...,K\},i\neq 2  \right\}$, the chain rule and the fact that conditioning reduces entropy. Hence, 
\begin{equation}
n(R_{1}+R_{2}-\epsilon_{n}) \leq  I\big(x_{1}^{n};y_{[1:K]}^{n}\big|x_{2}^{n},W_{[3:K]}\big) + I\big(x_{2}^{n};y_{2}^{n}\big|W_{[3:K]}\big) 
+ n\textstyle\sum_{i\neq2}R_{\rm b}^{[i,2]}.
\label{eq:hence}
\end{equation}

%

The remaining two terms in the RHS of (\ref{eq:hence}) can be further bounded as

\resizebox{1\linewidth}{!}{
\hspace{-0.23in}
  \begin{minipage}{\linewidth}
\begin{align}
&I\big(x_{1}^{n};y_{[1:K]}^{n}\big|x_{2}^{n},W_{[3:K]}\big) + I\big(x_{2}^{n};y_{2}^{n}\big|W_{[3:K]}\big) \nonumber
\\&\stackrel{\phantom{\rm (a')}}{=} h\big(y_{1}^{n},y_{[3:K]}^{n}\big|y_{2}^{n},x_{2}^{n},W_{[3:K]}\big) -h\big(z_{[1:K]}^{n}\big)  + h\big(y_{2}^{n}\big|W_{[3:K]}\big) \nonumber
\\&\stackrel{\phantom{\rm (a')}}{=}h\big(h_{11} x_{1}^{n} +  z_{1}^{n},h_{31} x_{1}^{n} +  z_{3}^{n},...,h_{K1} x_{1}^{n} +  z_{K}^{n} \big| h_{21} x_{1}^{n} +  z_{2}^{n}\big)\nonumber 
\\&\hspace{0.3in}{+}h\big(h_{21} x_{1}^{n} +h_{22} x_{2}^{n}+  z_{2}^{n}\big)-h\big(z_{[1:K]}^{n}\big)\nonumber
\\&\stackrel{{\rm (a')}}{\leq} \sum_{i\neq2} h\big(h_{i1} x_{1}^{n} +  z_{i}^{n} \big|h_{21} x_{1}^{n} +  z_{2}^{n}\big)\nonumber 
\\&\hspace{0.5in}{+}h\big(h_{21} x_{1}^{n} +h_{22} x_{2}^{n}+  z_{2}^{n}\big)-h\big(z_{[1:K]}^{n}\big)\nonumber
\\&\stackrel{{\rm (b')}}{=} \sum_{i\neq2} h\big(h_{i1} x_{1}^{n} +  z_{i}^{n} -h_{i1}h_{21}^{-1}(h_{21} x_{1}^{n} +  z_{2}^{n}) \big|h_{21} x_{1}^{n} +  z_{2}^{n}\big)\nonumber 
\\&\hspace{0.5in}{+}h\big(h_{21} x_{1}^{n} +h_{22} x_{2}^{n}+  z_{2}^{n}\big)-h\big(z_{[1:K]}^{n}\big)\nonumber
\\&\stackrel{\phantom{\rm (a')}}{\leq} \sum_{i\neq2} h\big(z_{i}^{n} -h_{i1}h_{21}^{-1}z_{2}^{n}\big){+}h\big(h_{21} x_{1}^{n} +h_{22} x_{2}^{n}+  z_{2}^{n}\big)-h\big(z_{[1:K]}^{n}\big)\nonumber
\\&\stackrel{{\rm (c')}}{\leq} n\sum_{i\neq2} \log(1+|h_{i1}|^{2}/|h_{21}|^{2})
+ n \log(1+P(|h_{21}|^{2}+ |h_{22}|^{2})),\nonumber\\\nonumber
\end{align}
\end{minipage}}
where ($\rm a'$) follows from the chain rule and the fact that conditioning reduces entropy, ($\rm b'$) follows from the translational invariance property, 
and ($\rm c'$) from the fact that the Gaussian distribution maximizes entropy for a given variance. Putting everything together, we conclude that
\begin{equation}
\vspace{-0.01in}
R_{1}+R_{2}\leq \log(1+P(|h_{21}|^{2}+ |h_{22}|^{2})) + \sum_{i\neq2}R_{\rm b}^{[i,2]}
+o(\log(P)).
\end{equation}

In a similar way, we can obtain bounds of the same form for the pairs $R_{2}+R_{3}$, $R_{3}+R_{4}$, up to $R_{K-1}+R_{K}$ and $R_{K}+R_{1}$ which we can add together to show that
\begin{equation}
2\sum_{k=1}^{K}R_{k}\;\leq \sum_{\ell=2}^{K+1}\log(1+P(|h_{\ell,\ell-1}|^{2}+ |h_{\ell\ell}|^{2})) + K\overline R_{\rm b}
+o(\log(P)).\end{equation} Since $\overline R_{\rm b}\leq L(P)$ for any achievable scheme, dividing by $\log(P)$ and taking the limit as $P\rightarrow \infty$ yields $2\msf{DoF}(\alpha)\leq 1+\alpha$ as required. 
Further, by considering each user separately (single user bound) we can trivially obtain that $\msf{DoF}(\alpha)\leq 1$, and hence conclude that $$\msf{DoF}(\alpha) \leq \min\left\{1,\frac{1+\alpha}{2}\right\},\;\alpha\geq0$$ as stated by Theorem~\ref{thm:out}.\hfill \QED

\clearpage

\section{Proof of Theorem~\ref{thm:out2}: Transmitter Cooperation Upper Bound }\label{proof:conv2}

Here, following similar steps as in the receiver cooperation case, we will first bound all the rate pairs $R_{i}+R_{j}$ and then sum them up to obtain the corresponding result for $\sum R_{k}$.
Let $\tilde M_{i}^{[n]}\triangleq \left\{[m_{i\rightarrow  \ell}(\tau)]_{\tau=1}^{n}, \ell\in\{1,...,K\},\ell\neq i  \right\}$ denote all the backhhaul messages that originate from transmitter $i$. In order to bound $R_{1}+R_{2}$ we will consider a genie that gives 
$$W_{2},\;\;\tilde M^{[n]}_{1}\; \mbox{and}\;\;y_{2}^{n}\triangleq [y_{2}(\tau)]_{\tau=1}^{n}$$ to receiver one,  and the messages $$W_{[3:K]}\triangleq [W_{3},W_{4},...,W_{K}]$$ to both receivers one and two, as side information.  
Notice that the encoded signals $(x_{2}^{n}, x_{3}^{n},..., x_{K}^{n})$ are fully determined as a function of $(\tilde M^{[n]}_{1},W_{[2:K]})$, and hence, with this side information, receiver~1 will eventually be able  to eliminate all interference.

Starting from Fano's inequality we have that

\begin{align}
&n(R_{1}+R_{2}-\epsilon_{n}) \nonumber
\\
&\stackrel{\phantom{\rm (0)}}{=} I\big(W_{1};y_{1}^{n}\big) + 
I\big(W_{2};y_{2}^{n}\big)\nonumber
\\
&\stackrel{\phantom{\rm (0)}}{\leq} 
I\big(W_{1};y_{1}^{n},y_{2}^{n},{\tilde M}_{1}^{[n]}\big|W_{2},W_{[3:K]}\big) + 
I\big(W_{2},{\tilde M}_{1}^{[n]};y_{2}^{n}\big|W_{[3:K]}\big)\nonumber
\\
&\stackrel{\phantom{\rm (0)}}{=} 
I\big(W_{1};{\tilde M}_{1}^{[n]}\big|W_{2},W_{[3:K]}\big) + I\big(W_{1};y_{1}^{n},y_{2}^{n}\big|{\tilde M}_{1}^{[n]},W_{2},W_{[3:K]}\big) + 
I\big(W_{2},{\tilde M}_{1}^{[n]};y_{2}^{n}\big|W_{[3:K]}\big)\nonumber
\\ 
&\stackrel{\phantom{\rm (0)}}{\leq} 
H({\tilde M}_{1}^{[n]}) + I\big(W_{1};y_{1}^{n},y_{2}^{n}\big|{\tilde M}_{1}^{[n]},W_{2},W_{[3:K]}\big) + 
I\big(W_{2},{\tilde M}_{1}^{[n]};y_{2}^{n}\big|W_{[3:K]}\big)\nonumber
\\
&\stackrel{\phantom{\rm (0)}}{\leq}  
\sum_{j\neq1}H([m_{1\rightarrow j}(\tau)]_{\tau=1}^{n})+ I\big(W_{1};y_{1}^{n},y_{2}^{n}\big|{\tilde M}_{1}^{[n]},W_{2},W_{[3:K]}\big) + 
I\big(W_{2},{\tilde M}_{1}^{[n]};y_{2}^{n}\big|W_{[3:K]}\big)\nonumber
\\
&\stackrel{\phantom{\rm (0)}}{=}  
n\sum_{j\neq1}R_{\rm b}^{[1,j]} + I\big(W_{1};y_{1}^{n},y_{2}^{n}\big|{\tilde M}_{1}^{[n]},W_{2},W_{[3:K]}\big) + 
I\big(W_{2},{\tilde M}_{1}^{[n]};y_{2}^{n}\big|W_{[3:K]}\big)\label{lasteq}
\end{align}

The last two terms in (\ref{lasteq}) can be further bounded as
\begin{align}
&I\big(W_{1};y_{1}^{n},y_{2}^{n}\big|{\tilde M}_{1}^{[n]},W_{2},W_{[3:K]}\big) + 
I\big(W_{2},{\tilde M}_{1}^{[n]};y_{2}^{n}\big|W_{[3:K]}\big)\nonumber\\
&\stackrel{\phantom{\rm (0)}}{=} 
h\big(y_{1}^{n}\big|y_{2}^{n},{\tilde M}_{1}^{[n]},W_{[2:K]}\big) - h\big(y_{1}^{n},y_{2}^{n}\big|{\tilde M}_{1}^{[n]},W_{[1:K]}\big)  + h\big(y_{2}^{n}\big|W_{[3:K]}\big) \nonumber
\\
&\stackrel{\phantom{\rm (0)}}{\leq} 
h\big(h_{11}x_{1}^{n}+z_{1}^{n}\big|h_{21}x_{1}^{n}+z_{2}^{n}\big) - h\big(z_{1}^{n},z_{2}^{n}\big)  + h\big(y_{2}^{n}\big) \label{sseqs}
\\
&\stackrel{\phantom{\rm (0)}}{\leq} n\log\big(1+|h_{11}|^{2}/|h_{21}|^{2}\big) + n\log\big(1+P\textstyle\sum_{i,j} |h_{2i}||h_{2j}|\big)\label{sseqs2}
\end{align}
where (\ref{sseqs}) follows from the fact that $x_{[2:K]}^{n}$ is a function of $({\tilde M}_{1}^{[n]},W_{[2:K]})$ and (\ref{sseqs2}) from the fact that the Gaussian distribution maximizes the differential entropy for a given variance and by applying Lemma~\ref{lemma1} that is stated below.

\begin{lemma}\label{lemma1}
Let $\xv_{k}\in \CC^{n}$, $k\in \{1,2,...,K\}$, be any random vectors satisfying $\frac{1}{n}\EE[\xv_{k}^{\rm H}\xv_{k}]\leq P_{k}$, $\forall k$, and let $\sv \triangleq \sum_{k=1}^{K}\xv_{k}$ and $\Km_{\sv} \triangleq \EE\left[ (\sv - \EE[\sv])(\sv - \EE[\sv])^{\rm H}   \right]$. We have that 
$$\mbox{\rm det}({\bf I} + \Km_{\sv})^{{1}/{n}}\leq 1 + \sum_{k=1}^{K}\sum_{\ell=1}^{K}\sqrt{P_{k}P_{\ell}}.$$
 
\end{lemma}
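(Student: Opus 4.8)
The plan is to bound the determinant of $\mathbf{I} + \Km_{\sv}$ by first controlling the entries of the covariance matrix $\Km_{\sv}$ and then invoking a standard inequality relating the determinant to the trace (or to the operator norm). First I would observe that, by the Cauchy--Schwarz inequality in $\CC^n$ applied entrywise, each diagonal entry of $\Km_{\xv_k}$ (the per-vector covariance) is bounded by $n P_k$, and more generally the cross-covariance $\Km_{\xv_k,\xv_\ell} \triangleq \EE[(\xv_k - \EE\xv_k)(\xv_\ell - \EE\xv_\ell)^{\herm}]$ has operator norm at most $\sqrt{nP_k}\sqrt{nP_\ell}$: indeed $\|\Km_{\xv_k,\xv_\ell}\| \le \EE[\|\xv_k - \EE\xv_k\|\,\|\xv_\ell - \EE\xv_\ell\|] \le \sqrt{\EE\|\xv_k-\EE\xv_k\|^2}\sqrt{\EE\|\xv_\ell-\EE\xv_\ell\|^2} \le n\sqrt{P_k P_\ell}$. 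Summing, $\Km_{\sv} = \sum_{k,\ell} \Km_{\xv_k,\xv_\ell}$ so $\|\Km_{\sv}\| \le n\sum_{k,\ell}\sqrt{P_kP_\ell}$, and in fact the same bound holds for $\trace(\Km_{\sv})/n$, which is what the AM--GM step below actually needs.

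The key computational step is then the inequality $\det(\mathbf{I} + \Km_{\sv})^{1/n} \le 1 + \frac{1}{n}\trace(\Km_{\sv})$, which follows from the arithmetic--geometric mean inequality applied to the eigenvalues $\lambda_1,\dots,\lambda_n \ge 0$ of $\Km_{\sv}$: $\det(\mathbf{I}+\Km_{\sv})^{1/n} = \bigl(\prod_{i=1}^n (1+\lambda_i)\bigr)^{1/n} \le \frac{1}{n}\sum_{i=1}^n (1+\lambda_i) = 1 + \frac{1}{n}\trace(\Km_{\sv})$. Combining this with the trace bound $\trace(\Km_{\sv}) = \sum_{k,\ell} \trace(\Km_{\xv_k,\xv_\ell})$ and $|\trace(\Km_{\xv_k,\xv_\ell})| \le n\sqrt{P_kP_\ell}$ (again Cauchy--Schwarz, since $\trace(\Km_{\xv_k,\xv_\ell}) = \EE[(\xv_\ell - \EE\xv_\ell)^{\herm}(\xv_k - \EE\xv_k)]$) yields $\det(\mathbf{I}+\Km_{\sv})^{1/n} \le 1 + \sum_{k=1}^K\sum_{\ell=1}^K \sqrt{P_kP_\ell}$, as claimed.

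The main obstacle, such as it is, is getting the cross-term bound $|\trace(\Km_{\xv_k,\xv_\ell})| \le n\sqrt{P_kP_\ell}$ clean rather than a crude $\frac{n}{2}(P_k+P_\ell)$ — this is exactly where one must use Cauchy--Schwarz in the right form ($\EE[|\langle \xv_k - \EE\xv_k, \xv_\ell - \EE\xv_\ell\rangle|] \le \sqrt{\EE\|\xv_k-\EE\xv_k\|^2 \,\EE\|\xv_\ell - \EE\xv_\ell\|^2}$) together with the variance-reduction bound $\EE\|\xv_k - \EE\xv_k\|^2 \le \EE\|\xv_k\|^2 \le nP_k$, so that the double sum telescopes into $\bigl(\sum_k \sqrt{P_k}\bigr)^2 = \sum_{k,\ell}\sqrt{P_kP_\ell}$. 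Everything else is routine linear algebra; no probabilistic subtlety beyond these second-moment manipulations is needed, and the Gaussianity of the noise plays no role in the lemma itself.
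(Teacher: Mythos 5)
Your proof is correct and follows essentially the same route as the paper: both reduce $\det(\mathbf{I}+\Km_{\sv})^{1/n}$ to $1+\frac{1}{n}\trace(\Km_{\sv})$ via the AM--GM inequality (the paper phrases this as Hadamard plus AM--GM on the diagonal entries, you apply AM--GM directly to the eigenvalues, but these give the identical bound) and then control the cross-covariance traces $\trace(\Km_{\xv_k,\xv_\ell})$ by Cauchy--Schwarz and the power constraint. The paper performs the Cauchy--Schwarz step coordinate-by-coordinate in two stages while you apply it once to the inner product $\EE[(\xv_\ell-\EE\xv_\ell)^{\herm}(\xv_k-\EE\xv_k)]$; this is a cosmetic difference, and your preliminary operator-norm bound, while correct, is not actually used.
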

\begin{proof}
Since ${\bf I} + \Km_{\sv}$ is positive definite, we have (from the Hadamard's inequality followed by the arithmetic-geometric mean inequality) that  
\begin{equation}
\mbox{\rm det}({\bf I} + \Km_{\sv})^{{1}/{n}}\leq \frac{1}{n}\mbox{tr}({\bf I} + \Km_{\sv}).\label{coveq}\end{equation}
Further, we can rewrite
\begin{align}
\frac{1}{n}\mbox{tr}({\bf I} + \Km_{\sv}) &= \frac{1}{n}\sum_{i=1}^{n}\left(1 + \mbox{\sf Var}[\sv(i)]\right) \\
&= 1+ \frac{1}{n}\sum_{i=1}^{n}\mbox{\sf Var}\left[\sum_{k=1}^{K}\xv_{k}(i)\right]\\
&= 1+ \frac{1}{n}\sum_{i=1}^{n}\sum_{k=1}^{K}\sum_{\ell=1}^{K}\mbox{\sf Cov}\left[\xv_{k}(i),\xv_{\ell}(i)\right]\\
&=1+ \sum_{k=1}^{K}\sum_{\ell=1}^{K}\frac{1}{n}\sum_{i=1}^{n}\mbox{\sf Cov}\left[\xv_{k}(i),\xv_{\ell}(i)\right]\label{temp1}
\end{align}
Now we can bound
\begin{align}
\sum_{i=1}^{n}\mbox{\sf Cov}\left[\xv_{k}(i),\xv_{\ell}(i)\right]
&\leq \sum_{i=1}^{n}\sqrt{\mbox{\sf Var}[\xv_{k}(i)] \cdot\mbox{\sf Var}[\xv_{\ell}(i)] }
\\
&\leq 
\sqrt{\left(\sum_{i=1}^{n}\mbox{\sf Var}[\xv_{k}(i)]\right)\cdot \left(\sum_{i=1}^{n}\mbox{\sf Var}[\xv_{\ell}(i)]\right)} \\
&\leq
\sqrt{\left(\sum_{i=1}^{n}\EE\left[|\xv_{k}(i)|^{2}\right]\right)\cdot \left(\sum_{i=1}^{n}\EE\left[|\xv_{\ell}(i)|^{2}\right]\right)} 
\\
&= 
\sqrt{\left(\EE[\xv_{k}^{\rm H}\xv_{k}]\right)\cdot \left(\EE[\xv_{\ell}^{\rm H}\xv_{\ell}]\right)}  \\
&\leq n\sqrt{P_{k}P_{\ell}}. \label{temp2}
\end{align}
And hence, substituting (\ref{temp2}) in (\ref{temp1}) and (\ref{coveq}) yields the desired result and completes the proof.
%
\end{proof}

Putting everything together we have that
\begin{equation}
R_{1}+R_{2} \leq \log\big(1+|h_{11}|^{2}/|h_{21}|^{2}\big) + \log\big(1+P\textstyle\sum_{i,j} |h_{2i}h_{2j}^{*}|\big) + \sum_{j\neq1}R_{\rm b}^{[1,j]}.
\end{equation}

In a similar way, we can obtain bounds of the same form for the pairs $R_{2}+R_{3}$, $R_{3}+R_{4}$, up to $R_{K-1}+R_{K}$ and $R_{K}+R_{1}$ which we can add together to show that
\begin{equation}
2\sum_{k=1}^{K}R_{k}\;\leq \sum_{k=1}^{K}\log\big(1+P\textstyle\sum_{i,j} |h_{k i}h_{k j}^{*}|\big) + K\overline R_{\rm b}
+o(\log(P)).\end{equation} Since $\overline R_{\rm b}\leq L(P)$ for any achievable scheme, dividing by $\log(P)$ and taking the limit as $P\rightarrow \infty$ yields $2\msf{DoF}(\alpha)\leq 1+\alpha$ as required. 
Further, by considering each user separately (single user bound) we can trivially obtain that $\msf{DoF}(\alpha)\leq 1$, and hence conclude that $$\msf{DoF}(\alpha) \leq \min\left\{1,\frac{1+\alpha}{2}\right\},\;\alpha\geq0$$ as stated by Theorem~\ref{thm:out2}.\hfill \QED

\clearpage

\clearpage

\bibliographystyle{ieeetr}
\bibliography{referencesIT}

\end{document}